\theoremstyle{plain}
\newtheorem{theorem}{Theorem}[section]
\newtheorem{lemma}[theorem]{Lemma}
\newtheorem{cor}[theorem]{Corollary}
\theoremstyle{definition}
\newtheorem{definition}[theorem]{Definition}
\newtheorem{fact}[theorem]{Fact}
\newcommand {\eps} {\varepsilon}
\newcommand {\defeq} {\ensuremath{ \stackrel{\mathrm{def}}{=} }}
\newcommand {\email} [1] {\href{mailto:#1}{\texttt{#1}}}
\newcommand {\ket} [1] {\ensuremath{ \left| #1 \right\rangle }}
\newcommand {\ketbratwo} [2] {\ensuremath{ \left| #1 \middle\rangle \middle\langle #2 \right| }}
\newcommand {\ketbra} [1] {\ketbratwo{#1}{#1}}
\newcommand {\Tr} {\ensuremath{ \mathrm{Tr} }}
\newcommand {\suppress}[1]{}
\def\C{\CMcal{C}}
\def\P{\mathrm{P}}
\def\Q{\CMcal{Q}}
\def\S{\CMcal{S}}
\def\T{\CMcal{T}}
\def\H{\CMcal{H}}
\def\L{\mathcal{L}}
\def\N{\mathcal{N}}
\newcommand {\mytitle} {Concentration bounds for quantum states with finite correlation length on quantum spin lattice systems}
\newcommand {\Anurag}  {Anurag Anshu}
\newcommand {\CQT} {Centre for Quantum Technologies}
\newcommand {\NUS} {National University of Singapore}
\newcommand {\authorblock} [3] {
	\begin{minipage}[t]{0.3\linewidth}
		\centering
		{#1}\\[0.8ex]
		{\footnotesize {#2}\\[-0.7ex]
		\email{#3}}
	\end{minipage}\vspace{1ex}
}
\begin{document}

\title{\textbf{\mytitle}\\[2ex]}

\author{
    \authorblock{\Anurag}{\CQT, \NUS}{a0109169@u.nus.edu}
}

\maketitle

\begin{abstract}
We consider the problem of determining the energy distribution of quantum states that satisfy exponential decay of correlation and product states, with respect to a quantum local hamiltonian on a spin lattice. For a quantum state on a $D$-dimensional lattice that has correlation length $\sigma$ and has average energy $e$ with respect to a given local hamiltonian (with $n$ local terms, each of which has norm at most $1$), we show that the overlap of this state with eigenspace of energy $f$ is at most $exp(-((e-f)^2\sigma)^{\frac{1}{D+1}}/n^{\frac{1}{D+1}}D\sigma)$. This bound holds whenever $|e-f|>2^{D}\sqrt{n\sigma}$. Thus, on a one dimensional lattice, the tail of the energy distribution decays exponentially with the energy. 

For product states, we improve above result to obtain a Gaussian decay in energy, even for quantum spin systems without an underlying lattice structure. Given a product state on a collection of spins which has average energy $e$ with respect to a local hamiltonian (with $n$ local terms and each local term overlapping with at most $m$ other local terms), we show that the overlap of this state with eigenspace of energy $f$ is at most $exp(-(e-f)^2/nm^2)$. This bound holds whenever $|e-f|>m\sqrt{n}$. 
\end{abstract}
\section{Introduction}
\label{sec:intro}

A question of primary interest for local hamiltonian spin systems is to determine the energy distribution of natural class of states with respect to a given local hamiltonian. The knowledge of energy distribution reveals a lot of information about the nature of the state itself. As we shall discuss below, a \textit{gaussian distribution} of energy can be associated to a product state. On the other hand, the well known entangled state $\frac{1}{\sqrt{2}}\ket{0}^{\otimes n} + \frac{1}{\sqrt{2}}\ket{1}^{\otimes n}$ (also termed as the `cat state') has energy distribution peaking at opposite ends of the spectrum of the hamiltonian: $\sum_{i=1}^n \ketbra{1}_i$. Moreover, the knowledge of energy distribution plays an important role in the study of thermalization of quantum systems.

The aforementioned question has been well studied in classical setting, important examples of which are the Chernoff bound \cite{chernoff} and the Central limit theorem (which applies to asymptotic regime). Chernoff bound can be informally states as follows. Let $X_1,X_2\ldots X_n$ be independent and identically distributed random variables taking values in $[0,1]$ and each having average value $A$. Then $\text{Pr}(|X_1+X_2\ldots +X_n - nA| > \eps) \leq e^{-c\eps^2/n}$, where $c$ is a constant that depends on $A$. 

One interpretation of this bound (which was the original motivation in \cite{chernoff}) is that it provides a recipe for distinguishing between two probability distributions $P\defeq \sum_x p(x)\ketbra{x}$ and $Q\defeq \sum_x q(x)\ketbra{x}$ with expectation values $A$ and $B$ respectively. Given $n$ independent samples $x_1,x_2\ldots x_n$ from either of these distributions, the sum $\sum_i x_i$ is highly likely to be concentrated around $nA$ if the underlying distribution is $P$ and around $nB$ if the underlying distribution is $Q$. A more precise formulation of this idea requires characterizing the trace distance between $P^{\otimes n}$ and $Q^{\otimes n}$ as $n$ becomes large, and it has been generalized to the quantum setting in \cite{audernaut}.  

Another interpretation of the Chernoff bound, which is the focus of present work, lies in the setting of `classical' local Hamiltonian systems. Consider a product state $\rho^{\otimes n}$ on $n$ sites, where $\rho\defeq \sum_x p(x)\ketbra{x}$. Let $H$ be a $1$-local Hamiltonian $H\defeq \sum_i h_i$, such that $h_i = \sum_x x\ketbra{x}$ acts non-trivially only on the site $i$ and is same for each site. If $A \defeq \Tr(\rho h_i)$ is the expectation value of $\rho$ with respect to $h_i$, then $nA$ is \textit{average energy} of $\rho^{\otimes n}$ with respect to the hamiltonian $H$. Let $\Pi_{\geq nA+\eps}$ be the projector onto eigenstates of $H$ with energy at least $nA+\eps$. Then the Chernoff bound implies that $\Tr(\rho^{\otimes n}\Pi_{\geq nA+\eps}) \leq e^{-c\eps^2/n}$. Thus, the energy distribution of $\rho^{\otimes n}$ is highly concentrated around the average energy $nA$.   

\suppress{
\bigskip
\begin{figure}[ht]
\centering
\begin{tikzpicture}[xscale=1,yscale=1.2]

\draw[thick, <->] (0,4) -- (0,0) -- (7.5,0);
\draw[very thick, blue] (0.5,0.5) to [out=5, in=200] (2.6,3);
\draw[very thick, blue] (2.6,3) to [out=10, in=170] (2.8,3);
\draw[very thick, blue] (2.8,3) to [out=340, in=175] (4.9,0.5);

\draw[very thick, red] (3.0,0.5) to [out=5, in=200] (5.1,3);
\draw[very thick, red] (5.1,3) to [out=10, in=170] (5.3,3);
\draw[very thick, red] (5.3,3) to [out=340, in=175] (7.4,0.5);

\draw [thick] (2.75,0.1) -- (2.75,-0.1);
\node at (2.8,-0.3) {$nA$};

\draw [thick] (5.35,0.1) -- (5.35,-0.1);
\node at (5.4,-0.3) {$nB$};

\draw[thick, <->] (8.5,4) -- (8.5,0) -- (13.5,0);
\draw[very thick, purple] (9,0.5) to [out=5, in=200] (11.1,3);
\draw[very thick, purple] (11.1,3) to [out=10, in=170] (11.3,3);
\draw[very thick, purple] (11.3,3) to [out=340, in=175] (13.4,0.5);

\draw [thick] (11.25,0.1) -- (11.25,-0.1);
\node at (11.3,-0.3) {$nA$};

\node at (13.3,-0.3) {$e$};
\node at (7.9,4) {$\Tr(\rho\Pi_e)$};

\end{tikzpicture}
\caption{Left figure: Plot of the two distributions $X^{\otimes n}$ (blue) and $Y^{\otimes n}$ (red) against  with expectation values $nA$ and $nB$ respectively. The probability associated to their intersection is exponentially small in $n$, }
 \label{fig:chernoff}
\end{figure}
\bigskip
}
The energy distribution of a product state for quantum lattice system with infinitely many sites was considered in \cite{goderis} (for translationally invariant systems) and in \cite{hartmann} (for non-translationally invariant systems). These results can be regarded as a generalization of the Central limit theorem to quantum systems. A non-asymptotic version of Central limit theorem is the Berry-Esseen theorem (\cite{berry},\cite{esseen}), which provides an upper bound on \textit{trace distance} between energy distribution of product state and the \textit{normal distribution} as a function of lattice size. This upper bound goes to zero as lattice size approaches infinity, thus recovering the Central limit theorem.  For quantum states with finite correlation length (which includes product states) on finite sized lattice, a quantum version of Berry-Esseen theorem  was recently shown to hold in \cite{brandao1},\cite{brandao2}. 

These results give a strong indication that states satisfying exponential decay of correlation behave similar to product states, even when their energy distributions are measured with respect to the eigenspectrum of a non-commuting (but local) hamiltonian. The work \cite{Keating2015} goes even further to show that non-commuting local hamiltonians themselves have energy spectrum that resemble that of a $1$-local hamiltonian (although, quite curiously, the same work shows that almost all eigenvectors of non-commuting local hamiltonians are highly entangled, in contrast with the eigenvectors of $1$-local hamiltonians). 

Above mentioned results have added to the growing body of research on general properties of local hamiltonian systems, such as the Lieb-Robinson bound \cite{liebrobinson}, exponential decay of correlation \cite{hastings04}, the area laws \cite{Hastings2007, ALV2012, AKLV2013} and local reversibility \cite{aradkuwahara}, to name a few. They have also found several applications in the problem of thermalization of many body systems. To start with, one of the first steps towards the problem of \textit{locality of temperature} \footnote{which is roughly the problem of assigning a temperature to reduced density matrix of  thermal state of a local hamiltonian, detailed discussion can be found in \cite{kliesch13}} was taken in \cite{Hartmann04}. Crucially using the Central limit theorem obtained in \cite{hartmann}, the authors characterized a set of conditions under which a given thermal state of a quantum local hamiltonian on a lattice would be close to a tensor product of thermal states on local subsystems on the lattice. 

The work \cite{Cramer12} considered the problem of thermalization under random hamiltonians, where the hamiltonian was generated via a random unitary on a fixed local hamiltonian $H$. One of the main technical problems in this work was the study of the characteristic function $\Tr(e^{iHt}\frac{I}{d})$, where $\frac{I}{d}$ is the maximally mixed state (which is also a product state on the lattice). The techniques were inspired from the proof of Central limit theorem in the works \cite{goderis},\cite{hartmann}, where the characteristic function $\Tr(e^{iHt}\rho)$ of a product state $\rho$ had been investigated in detail. 

The quantum version of Berry-Esseen theorem \cite{brandao2} was used to show in \cite{brandao1} that Gibbs state of a local hamiltonian $H$ at sufficiently high temperature (high enough to ensure a clustering of correlation) is indistinguishable, over sufficiently large regions of lattice (that scale sub-linearly with lattice size), from the microcanonical ensemble of eigenstates of $H$ which have eigenvalues close to the average energy of the Gibbs state. This result bears close resemblance to the Eigenstate Thermalization Hypothesis \cite{Srednicki94, Deutsch91}, which is a stronger conjecture stating that every eigenstate of $H$ with eigenvalue close to the average energy of Gibbs state of $H$ is locally indistinguishable from this Gibbs state.   

In present work, we provide further details on energy distribution of states satisfying exponential decay of correlation and product states. Our main results can be seen as an analogue of the Chernoff bound for quantum lattice systems.

Our first result concerns states that satisfy exponential decay of correlation on a $D$-dimensional lattice. Well known examples of such states include the ground states of gapped local hamiltonians \cite{hastings04} and Gibbs state above a finite temperature \cite{kliesch13}. In fact, it has been shown in \cite{Friesdorf15} that for local hamiltonians exhibiting many body localization and having non-degenerate energy spectrum, all eigenvectors satisfy exponential decay of correlation. Thus our result provides information about structure of eigenvectors of such hamiltonians and may have applications in the phenomena of many body localization. 

Fix a $D$-dimensional lattice with spins of arbitrary local dimension sitting on each lattice site. Consider local hamiltonian $H$ on the lattice with $n$ local interaction terms, such that each local term has operator norm at most $1$ and its support is a hyper-cube of side length $2k$, hence containing $(2k)^D$ lattice sites  (see Section \ref{sec:preliminaries} and Figure \ref{fig:quantumlattice} for a detailed description of $H$). 
\begin{theorem}[Informal]
\label{theo:expodecay}
 Let $\rho$ be a quantum state with correlation length $\sigma$ and $\langle H\rangle_{\rho} \defeq \Tr(\rho H)$ be the average energy of $\rho$. Let $\Pi_{\geq f}$ ($\Pi_{\leq f}$) be the projection onto subspace which is union of eigenspaces of $H$ with eigenvalues $\geq f$ ($\leq f$).

For $a\geq \sqrt{\frac{2^{\mathcal{O}(D\log k)}}{n\sigma}}$ it holds that, 
$$\Tr(\rho\Pi_{\geq \langle H\rangle_{\rho} + na}) \leq \mathcal{O}(\sigma) e^{\frac{2Dk}{\sigma}}\cdot e^{-\frac{(na^2\sigma)^{\frac{1}{D+1}}}{\mathcal{O}(1)D\sigma}} \text{ and } \Tr(\rho\Pi_{\leq \langle H\rangle_{\rho} - na}) \leq \mathcal{O}(\sigma) e^{\frac{2Dk}{\sigma}}\cdot e^{-\frac{(na^2\sigma)^{\frac{1}{D+1}}}{\mathcal{O}(1)D\sigma}}.$$
\end{theorem}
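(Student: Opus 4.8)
The plan is to bound the upper tail $\Tr(\rho\Pi_{\geq\langle H\rangle_\rho + na})$ (the lower tail then follows verbatim after replacing $H$ by $-H$), and the first thing to notice is that since the target is a \emph{stretched} exponential with stretch exponent $2/(D+1)<1$ for $D\geq 2$, a naive single-parameter Chernoff bound on $\Tr(\rho e^{\lambda(H-\langle H\rangle_\rho)})$ cannot succeed: the Legendre/Chernoff method always certifies at least exponential decay, whereas here the generating function need not even converge. I would therefore use the \textbf{method of moments}. Writing $\tilde H\defeq H-\langle H\rangle_\rho$ and using that for every even integer $p$ one has the operator inequality $\Pi_{\geq\langle H\rangle_\rho + na}\leq(\tilde H/na)^{p}$, it follows that
$$\Tr(\rho\Pi_{\geq\langle H\rangle_\rho + na})\leq\frac{\mu_p}{(na)^{p}},\qquad \mu_p\defeq\Tr\br{\rho\,\tilde H^{\,p}},$$
so everything reduces to a sharp bound on the central moments $\mu_p$.

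The heart of the argument is a moment bound of the form $\mu_p\leq\br{B\cdot p^{(D+1)/2}}^{p}$, with $B$ of the order of the standard deviation, $B\approx\mathrm{poly}(k,D)\,\sqrt{n\sigma^{D}}$ (so that $\mathrm{Var}_\rho(H)\approx n\sigma^{D}$), the correlation length and dimension entering both through $B$ and through the super-exponential factor $p^{(D+1)/2}$. To prove it I would expand $\mu_p=\sum_{i_1,\dots,i_p}\Tr\br{\rho\,\tilde h_{i_1}\cdots\tilde h_{i_p}}$ with $\tilde h_i\defeq h_i-\langle h_i\rangle_\rho$, so that each factor is mean-zero, and bound each ordered correlator using the exponential decay of correlations of $\rho$. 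The mechanism is the clustering one: any correlator in which some $\tilde h_i$ is \emph{isolated} (at lattice distance $\gg\sigma$ from all other chosen terms) is suppressed, because decoupling it leaves a factor $\Tr(\rho\,\tilde h_i)=0$ up to an $e^{-d/\sigma}$ correction, so only index tuples whose supports form a cluster that is connected on the scale $\sigma$ contribute at leading order. Summing the correlation weights over all such connected configurations on the $D$-dimensional lattice — using that each local term overlaps $\mathrm{poly}(k)$ others and that the number of connected lattice animals of size $p$ grows only exponentially in $p$ — produces the base $B^{p}$, while the super-exponential $p^{(D+1)p/2}$ is the analogue of the Gaussian $p^{p/2}\sim(p-1)!!$ counting of pairings, inflated by the extra positional freedom that a finite (rather than zero) correlation length permits in $D$ dimensions.

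With the moment bound in hand the last step is elementary: insert it into $\mu_p/(na)^{p}\leq\br{Bp^{(D+1)/2}/na}^{p}$ and optimise over the even integer $p$. The optimum sits near $p^{\ast}\approx\br{na/(eB)}^{2/(D+1)}$, and substituting back yields $\exp\br{-\tfrac{D+1}{2}p^{\ast}}\approx\exp\br{-c\,(na/B)^{2/(D+1)}}$; using $B\approx\mathrm{poly}(k,D)\sqrt{n\sigma^{D}}$ turns $(na/B)^{2/(D+1)}$ into the stated exponent $(na^2\sigma)^{1/(D+1)}/(\mathcal{O}(1)D\sigma)$ up to the tracking of constants and of the exact powers of $\sigma$ and $k$. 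The prefactor $\mathcal{O}(\sigma)e^{2Dk/\sigma}$ falls out of the same bookkeeping (the $e^{2Dk/\sigma}$ being a range-versus-correlation-length boundary factor), and the hypothesis $a\geq\sqrt{2^{\mathcal{O}(D\log k)}/(n\sigma)}$ is exactly the requirement that $p^{\ast}\geq 2$ — i.e.\ that the deviation $na$ exceed a constant multiple of the standard deviation, so that we are genuinely in the large-deviation regime where the moment estimate is informative.

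The main obstacle is the moment bound $\mu_p\leq(Bp^{(D+1)/2})^{p}$ itself, and within it two intertwined difficulties. First, because $\rho$ is not a tensor product, a spatially isolated term does not decouple exactly but only up to $e^{-d/\sigma}$ leakage, and these leakage terms must be resummed; it is precisely this leakage that degrades the product-state growth $p^{p/2}$ to $p^{(D+1)p/2}$, and hence the Gaussian tail to the stretched-exponential one. Second, the correlators $\Tr(\rho\,\tilde h_{i_1}\cdots\tilde h_{i_p})$ are \emph{ordered} products of non-commuting operators, whereas the decay-of-correlations hypothesis primarily controls bipartite (two-region) correlations; bridging this gap — peeling off one spatially extremal term at a time while tracking both the non-commutativity and the geometry of the remaining cluster — is the delicate part, and extracting the correct $D$-dependence of the exponent hinges on carrying out the geometric counting of connected clusters sharply.
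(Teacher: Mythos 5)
Your proposal follows essentially the same route as the paper: Markov's inequality applied to an even central moment (Fact 2.1), a moment bound of the form $\Tr(\rho\,\tilde H^{r})\leq\bigl(\mathrm{poly}(k,D)\sqrt{n\sigma^{D}}\,r^{(D+1)/2}\bigr)^{r}$ obtained by noting that any ordered correlator containing a term isolated from all others is suppressed to $Ce^{-(l-2Dk)/\sigma}$ by the decay of correlations and the mean-zero normalization, counting the remaining ``clustered'' tuples, and resumming over the isolation distance $l$ (Lemmas 3.1 and 4.1), followed by optimizing $r\approx(na^{2}/\sigma^{D})^{1/(D+1)}$. The only cosmetic difference is that the paper's count is a pairing-style \emph{selection} encoding of tuples in which every term lies within distance $l$ of some other term (giving $(nr(4l)^{D})^{r/2}$), rather than a connected-cluster count, but your stated moment growth rate, the origin of the $r^{(D+1)r/2}$ inflation, the prefactor $\sigma e^{2Dk/\sigma}$, and the role of the lower bound on $a$ all match the paper's argument.
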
 

Formal statement of the theorem is given in Theorem \ref{formaltheo:expodecay}. Thus in one dimensional spin chain (with $D=1$), our upper bound decays exponentially with energy, rather than as a gaussian. The bound becomes weaker with higher dimensions and is depicted in Figure \ref{fig:distribution}. 

Our second result concerns product states over a collection of spins and does not require any underlying lattice arrangement of these spins. It does impose, however, a locality constraint on the hamiltonian that acts on these spins. Consider a hamiltonian $H$ which is a sum of $n$ terms, each term being $k$-local (that is, it acts non-trivially on at most $k$ spins) and having operator norm at most $1$. Let $m$ be the maximum number of neighbours of any local term, where two local terms are neighbours if there is a spin on which both act non-trivially (See Section \ref{sec:momentbound} and Figure \ref{fig:spincollection} for detailed description of $H$). We show the following.

\begin{theorem}[Informal]
\label{theo:qchernoff}
Consider a product state $\rho$ with average energy $\langle H\rangle_{\rho} \defeq \Tr(\rho H)$. Fix a real number $a\geq \sqrt{\frac{\mathcal{O}(m^2)}{n}}$. Let $\Pi_{\geq f}$ ($\Pi_{\leq f}$) be the projection onto subspace which is union of eigenspaces of $H$ with eigenvalues $\geq f$ ($\leq f$). It holds that $$\Tr(\rho\Pi_{\geq \langle H\rangle + na}) \leq e^{-\frac{na^2}{\mathcal{O}(m^2)}}$$ and $$\Tr(\rho\Pi_{\leq \langle H\rangle - na}) \leq e^{-\frac{na^2}{\mathcal{O}(m^2)}}.$$
\end{theorem}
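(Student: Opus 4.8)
The plan is to prove the statement by the method of (even) moments followed by an optimisation, so that both tails come out simultaneously. Write $H=\sum_{i=1}^{n}h_i$ and set $g_i\defeq h_i-\Tr(\rho h_i)\id$, so that $\bar H\defeq H-\langle H\rangle_{\rho}\,\id=\sum_i g_i$ with $\Tr(\rho g_i)=0$ and $\norm{g_i}\le 2$. First I would reduce the two tail probabilities to a single central moment bound. For every even $t$ the spectral calculus for $H$ gives the operator inequality $\bar H^{\,t}\succeq (na)^t\br{\Pi_{\ge\langle H\rangle_{\rho}+na}+\Pi_{\le\langle H\rangle_{\rho}-na}}$, because any eigenvalue $E$ of $H$ with $\abs{E-\langle H\rangle_{\rho}}\ge na$ satisfies $(E-\langle H\rangle_{\rho})^t\ge (na)^t$ (here evenness of $t$ handles the lower tail), while the remaining eigenvalues contribute a positive semidefinite operator. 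Applying $\Tr(\rho\,\cdot\,)$ and using that each summand on the right is nonnegative yields, for both tails at once,
\[
\Tr(\rho\,\Pi_{\ge\langle H\rangle_{\rho}+na}),\ \Tr(\rho\,\Pi_{\le\langle H\rangle_{\rho}-na})\ \le\ \frac{M_t}{(na)^t},\qquad M_t\defeq \Tr\br{\rho\,\bar H^{\,t}}.
\]

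Next I would expand $M_t=\sum_{i_1,\dots,i_t}\Tr\br{\rho\,g_{i_1}\cdots g_{i_t}}$ and exploit that $\rho$ is a product state across the spins, together with the interaction graph of the theorem (vertices $\set{1,\dots,n}$, edges between overlapping terms, maximum degree $m$). Two structural facts drive the estimate. Factorisation: if the multiset $\set{i_1,\dots,i_t}$ splits, in the interaction graph, into connected components acting on pairwise disjoint sets of spins, then operators from different components commute and the trace factorises into a product over the components (each factor keeping the original order of its operators). Vanishing: any connected component consisting of a single index appearing exactly once contributes the factor $\Tr(\rho\,g_i)=0$, since $g_i$ is mean-zero. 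Consequently only those sequences survive whose index multiset induces a disjoint union of connected components each of total multiplicity at least $2$, and for every such sequence the summand is bounded in absolute value by $\prod_j\norm{g_{i_j}}\le 2^{t}$.

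It then remains to count the surviving sequences, which is the combinatorial heart of the argument and the step I expect to be the main obstacle. The key input is that, since the interaction graph has maximum degree $m$, the number of connected vertex sets of a given size through a fixed vertex grows only like a power of $m$; I would combine this with a bound on the number of ways to distribute the $t$ ordered positions among components of multiplicity at least $2$. Carrying this out should give a sub-Gaussian moment estimate of the form
\[
M_t\ \le\ \br{c\,n\,m^2\,t}^{t/2}\qquad\text{for even }t,
\]
for an absolute constant $c$, where the appearance of $m^2$ rather than $m$ comes from the (necessarily somewhat crude) counting of admissible neighbouring-index configurations. Finally, substituting this into the Markov bound above gives $\Tr(\rho\,\Pi)\le\br{c\,m^{2}t/(na^{2})}^{t/2}$, and optimising over even $t$ by taking $t$ of order $na^{2}/(\mathcal{O}(m^{2}))$ — which is at least $2$ exactly when $a\ge\sqrt{\mathcal{O}(m^{2})/n}$, matching the hypothesis — yields the claimed tail $e^{-na^2/\mathcal{O}(m^2)}$ for both $\Pi_{\ge\langle H\rangle+na}$ and $\Pi_{\le\langle H\rangle-na}$.
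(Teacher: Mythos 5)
Your overall strategy is exactly the paper's: center the local terms, bound the even moment $\Tr(\rho\,\bar H^{\,t})$ by counting the tuples $(w_1,\dots,w_t)$ that survive the product-structure vanishing argument, then apply Markov (Fact \ref{fact:markov}) and optimize $t\sim na^2/m^2$. The only structural difference is cosmetic: you extract both tails from a single even moment, whereas the paper runs the upper-tail argument twice, once for $H$ and once for $H'=\sum_w(\mathrm{I}-h_w)$. (Also, since the paper takes $h_w\succeq 0$ with $\norm{h_w}_\infty\le 1$, one gets $\norm{g_w}_\infty\le 1$ rather than your $2$; immaterial.)

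The genuine gap is the step you yourself flag: the moment estimate $M_t\le\br{c\,n\,m^2\,t}^{t/2}$ is asserted, not proven, and it is precisely the content of the paper's key combinatorial lemma (Lemma \ref{lem:combinatorial1}). The subtlety your component-counting sketch must confront is this: the surviving condition (every position has \emph{some other} position with overlapping support) does \emph{not} mean that each index is adjacent to an earlier one in the tuple. Several positions can each be isolated among their predecessors and only be ``rescued'' by a single common later neighbour, so a naive encoding in which every non-first occurrence costs $O(mt)$ leaves potentially far more than $t/2$ ``free'' positions, each costing $n$, and the bound degrades. The paper's resolution is an injective re-encoding (\emph{selection} with a pointer-creation and update step): whenever several zero-marked positions point to the same later index, all but one are re-marked as bound, at the price of relaxing adjacency to ``within distance two in the interaction graph''. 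This guarantees at most $t/2$ free positions and is exactly where the $m^2$ (rather than $m$) enters --- each bound position then has at most $m^2 t$ possible values, giving $\sum_{u\le t/2}\binom{t}{u}n^u(m^2t)^{t-u}\le(4nm^2t)^{t/2}$. Your alternative route via counting connected vertex sets through a fixed vertex and partitions of $[t]$ into blocks of size $\ge 2$ can be made to work, but one must then control both the number of such partitions and the surjections of block positions onto connected index sets, and check that the accumulated powers of $t$ still assemble into $t^{t/2}$ in all regimes of $n$ versus $m^2t$; as written, that accounting is missing.
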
 

Formal statement of the theorem is given in Theorem \ref{formaltheo:qchernoff}. The energy distribution is depicted in Figure \ref{fig:distribution}. The bound is not only independent of any underlying lattice structure, but is also independent of the locality $k$. This is not surprising, since the quantity $n$ that appears in the bound is the number of local terms in $H$, rather than number of spins on which $H$ acts. Following corollary is a restatement of above bound, in terms of number of spins (which we call $N$) and the maximum number of local terms that act on any given spin (which we call $g$). In the following, we also assume that each local term is exactly $k$-local. 

\begin{cor}
\label{cor:qchernoff}
Consider a product state $\rho$ with average energy $\langle H\rangle_{\rho} \defeq \Tr(\rho H)$. Fix a real number $\eps\geq \sqrt{\mathcal{O}(g^3kN)}$. Let $\Pi_{\geq f}$ ($\Pi_{\leq f}$) be the projection onto subspace which is union of eigenspaces of $H$ with eigenvalues $\geq f$ ($\leq f$). It holds that $$\Tr(\rho\Pi_{\geq \langle H\rangle + \eps}) \leq e^{-\frac{\eps^2}{\mathcal{O}(g^3kN)}}$$ and $$\Tr(\rho\Pi_{\leq \langle H\rangle - na}) \leq e^{-\frac{\eps^2}{\mathcal{O}(g^3kN)}}.$$
\end{cor}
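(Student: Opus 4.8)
The plan is to obtain \Cref{cor:qchernoff} directly from \Cref{theo:qchernoff} (formally \Cref{formaltheo:qchernoff}) by a change of parameters, so that no new analytic estimate is required---only a translation between the pair $(n,m)$ appearing in the theorem and the triple $(N,g,k)$ appearing in the corollary. First I would bound the number of local terms $n$: since each local term acts on exactly $k$ spins and each spin is acted on nontrivially by at most $g$ terms, a double count of the incidences between terms and spins gives $nk \leq Ng$, hence $n \leq Ng/k$. Next I would bound $m$, the maximum number of neighbours of a local term: a fixed term $h_i$ acts on exactly $k$ spins, and each of those spins lies in at most $g$ terms, so every neighbour of $h_i$ must share one of these $k$ spins, giving at most $k(g-1) \leq kg$ neighbours and therefore $m \leq kg$.

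With these two inequalities in hand, I would substitute $\eps = na$ into \Cref{theo:qchernoff}. The admissibility condition $a \geq \sqrt{\mathcal{O}(m^2)/n}$ becomes $\eps \geq \sqrt{\mathcal{O}(m^2)\,n}$, and since $m^2 n \leq (kg)^2(Ng/k) = \mathcal{O}(g^3 k N)$, the hypothesis $\eps \geq \sqrt{\mathcal{O}(g^3 k N)}$ of the corollary indeed implies admissibility. For the bound itself I would rewrite the exponent of \Cref{theo:qchernoff} as $\frac{na^2}{\mathcal{O}(m^2)} = \frac{\eps^2}{\mathcal{O}(m^2)\,n}$ and use $m^2 n \leq \mathcal{O}(g^3 k N)$ to get $\frac{\eps^2}{\mathcal{O}(m^2)\,n} \geq \frac{\eps^2}{\mathcal{O}(g^3 k N)}$, which yields $\Tr(\rho\Pi_{\geq \langle H\rangle + \eps}) \leq e^{-\eps^2/\mathcal{O}(g^3 k N)}$; the lower-tail estimate follows identically.

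Since the whole argument reduces to elementary counting followed by substitution, there is no substantial obstacle. The only point requiring care is to propagate the hidden constants in the $\mathcal{O}(\cdot)$ of \Cref{theo:qchernoff} consistently through $n \leq Ng/k$ and $m \leq kg$, so that a single constant absorbed into $\mathcal{O}(g^3 k N)$ simultaneously controls both the admissibility threshold and the exponent.
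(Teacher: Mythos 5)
Your proposal is correct and matches the paper's own proof of Corollary \ref{formalcor:qchernoff} essentially verbatim: the same double-counting of term--spin incidences gives $n \leq gN/k$ and $m \leq kg$, and the same substitution $\eps = na$ with $nm^2 \leq Ng^3k$ converts both the admissibility threshold and the exponent of Theorem \ref{formaltheo:qchernoff}. Nothing is missing.
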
 

Formal statement of the corollary appears as Corollary \ref{formalcor:qchernoff}. It shows a gaussian decay for tail of energy distribution of product states in the scenario where $g,k$ are constants \footnote{An interesting class of local hamiltonian system with constant $g,k$ is the family of hamiltonians defined on an expander graph, which has recently been a subject of interest with reference to the Quantum PCP conjecture \cite{Brandao2013,Aharonov2013,eldar2013, Hastings2014,eldar2015,eldarharrow2015}. } independent of $N$ .

\bigskip
\begin{figure}[ht]
\centering
\begin{tikzpicture}[xscale=1.2,yscale=1.4]
\draw[thick, <->] (0,4) -- (0,0) -- (5,0);
\draw[very thick] (0.5,0.5) to [out=5, in=210] (2.4,3);
\draw[very thick] (3,3) to [out=330, in=175] (4.9,0.5);
\path [fill=gray] (0.5,0.5) to [out=5,in=240] (1.4,1.2) -- (1.4,0.5) -- (0.5,0.5);
\path [fill=gray] (4,1.2) to [out=300,in=175] (4.9,0.5) -- (4,0.5) -- (4,1.2);
\draw [thick, <->] (1.45,1) -- (3.95,1);
\draw [thick, <->] (2.4,3.1) -- (3,3.1);
\draw [thick] (2.75,0.1) -- (2.75,-0.1);
\node at (2.8,-0.3) {$\langle H\rangle_{\rho}$};
\node at (2.8,1.3) {$\sqrt{n\log(n)}$};
\node at (2.7,3.4) {$\sqrt{n}$};
\node at (4.8,-0.3) {$e$};
\node at (-0.6,4) {$\Tr(\rho\Pi_e)$};

\draw[thick, <->] (7,4) -- (7,0) -- (13,0);
\draw[very thick] (7.2,0.8) to [out=5, in=250] (9.7,3);
\draw[very thick] (10.3,3) to [out=290, in=175] (12.8,0.8);
\path [fill=gray] (7.2,0.79) to [out=13,in=220] (8.5,1.25) -- (8.5,0.5) -- (7.2,0.5) -- (7.2,0.75);
\path [fill=gray] (11.5,1.25) to [out=320,in=167] (12.8,0.8) -- (12.8,0.5) -- (11.5,0.5) -- (11.5,1.25);
\draw [thick, <->] (8.55,1) -- (11.45,1);
\draw [thick, <->] (9.7,3.1) -- (10.3,3.1);
\draw [thick] (10,0.1) -- (10,-0.1);
\node at (10,-0.3) {$\langle H\rangle_{\rho}$};
\node at (9.9,1.3) {$\sqrt{n(\sigma\log(n))^{D+1}}$};
\node at (10,3.4) {$\sqrt{\frac{n}{\sigma}}$};
\node at (13,-0.3) {$e$};
\node at (6.4,4) {$\Tr(\rho\Pi_e)$};

\end{tikzpicture}
\caption{The tail bounds according to Theorem \ref{theo:expodecay} (right hand side) and Theorem \ref{theo:qchernoff} (left hand side). The $x$-axis is energy $e$ and $y$-axis is the weight $\Tr(\rho\Pi_e)$, where $\Pi_e$ is projector onto eigenspace of $H$ with energy $e$. Shaded region depicts the part of energy distribution with overall weight at most $\frac{1}{n}$. The discontinuous part of the curve is where our results provide no information. We have ignored $\mathcal{O}(1)$ constants in the figure.}
 \label{fig:distribution}
\end{figure}
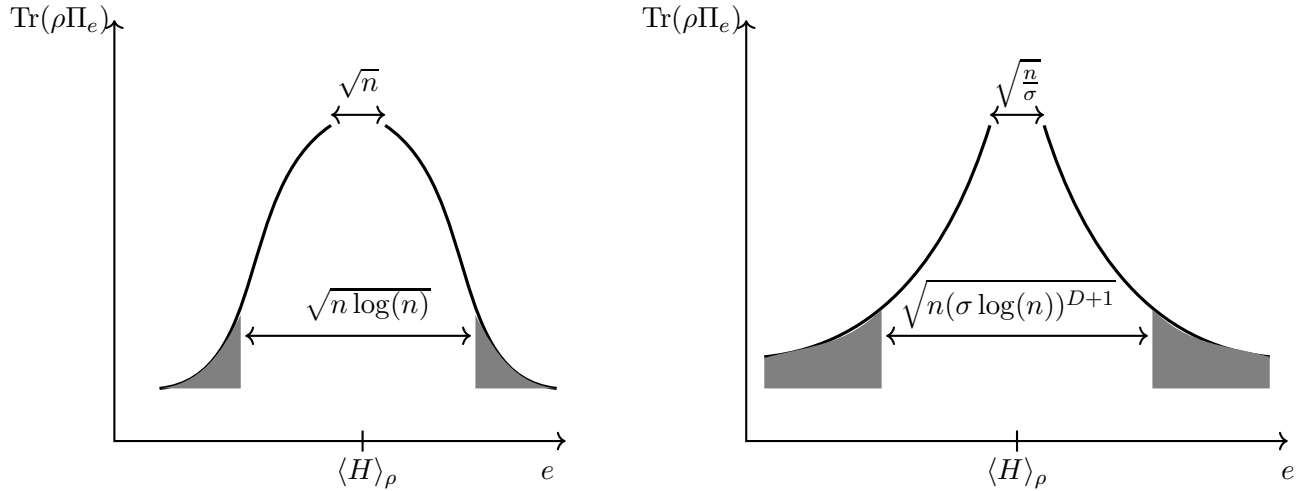
\bigskip

\subsection*{Related recent works}
A recent work \cite{kuwahara} has obtained a similar concentration result for product states (Lemma $4$ therein). The key idea is to split the hamiltonian $H$ as $H=H_1+H_2+\ldots$, where each $H_1,H_2\ldots $ is composed of local terms that are non-overlapping. Then from classical Chernoff bound, the product state exhibits a Gaussian decay in energy distribution for each of the hamiltonians $H_1,H_2\ldots$. Final step (which is also the main argument of the paper) is to combine these tails bounds to obtain a final bound for energy distribution with respect to the original hamiltonian $H$.  Unfortunately, the techniques do not extend to states satisfying exponential decay of correlation. To establish a bound for energy distribution with respect to $H$, one needs the knowledge of bounds for energy distribution with respect to each of the `classical hamiltonians' $H_1,H_2\ldots$. But even for these classical hamiltonians, no bound is known for states that satisfy exponential decay of correlation (apart from Theorem \ref{theo:expodecay}, to the best of author's knowledge). We have provided further comparision of the bound in \cite{kuwahara} and Theorem \ref{theo:qchernoff} in Subsection \ref{subsec:qchernoff}.

A concentration result has been noted in \cite{aradkuwahara} (Section $5$ in this reference) for ground states of gapped local hamiltonians on finite dimensional quantum lattice systems, which also exhibit exponential decay of correlation (\cite{hastings04}). In this work, the probability distribution has been shown to be concentrated about the \textit{median} of the distribution with the weight of the distribution above energy $\eps$ decaying as $e^{-|\eps-f|/\mathcal{O}(1)\sqrt{n\sigma}}$ ($f$ being the median of the distribution, $n$ being the number of local terms in the local hamiltonian and $\sigma$ being the correlation length of the ground state). In comparison, we show a concentration about the \textit{mean} of the distribution for all states satisfying exponential decay of correlation. While our bounds are weaker than those of \cite{aradkuwahara} in higher dimensions, it may be noted that we have considered a larger class of states that might possess weaker properties than the ground states of gapped local hamiltonians. This behaviour appears in the context of area laws as well: ground states of gapped local hamiltonians are known to have very good scaling of area laws with correlation length \cite{AKLV2013}; whereas a recent observation of Hastings \cite{Hastings2015} suggests that states satisfying exponential decay of correlation may have much weaker dependence of area law with correlation length \cite{brandaohorodecki}.   

\subsection*{Our technique and organisation} 

The idea behind our approach is straightforward, to compute the moment generating function $\Tr(H^r\rho)$ of the energy distribution and then use Markov's inequality to upper bound the desired probability. Without loss of generality, we can assume that $H = \sum_w h_w$, where $w$ is a label for local terms and $\langle h_w\rangle_{\rho}\defeq \Tr(\rho h_w)=0$. Our key technical contribution is the combinatorial lemma (Lemma \ref{lem:combinatorial}) which answers the following question: if we expand $H^r$ as a sum of product of local terms, that is $H^r = \sum_{w_1,w_2\ldots w_r}h_{w_1}h_{w_2}\ldots h_{w_r}$, how many terms $h_{w_1}h_{w_2}\ldots h_{w_r}$ make non-negligible (or non-zero) contribution to the moment generating function? We observe the terms that make non-negligible contribution possess a common property: there is no $h_{w_i}$ which is supported `far' from all of $h_{w_1},h_{w_2},\ldots h_{w_{i-1}}, h_{w_{i+1}},\ldots h_{w_r}$. Making the notion of `far' precise, we compute the number of such terms in Lemma \ref{lem:combinatorial} and use it to bound the moment generating function.

The paper is organized as follows. We state basic facts and describe our physical set-up needed for Theorem \ref{theo:expodecay} in section \ref{sec:preliminaries}. We prove our combinatorial lemma in Section \ref{sec:combinatorial}. In Section \ref{sec:expodecay} we prove our bounds for states satisfying exponential decay of correlation. In Section \ref{sec:momentbound}, we introduce the physical set-up required for Theorem \ref{theo:qchernoff} and provide the proof of the theorem. This proof also requires a variant of the combinatorial lemma (Lemma \ref{lem:combinatorial}) which we prove in Appendix \ref{append:combinatorial}. We conclude in Section \ref{sec:conclusion} and address some questions left open by this work. 

\section{Physical set-up and basic facts}
\label{sec:preliminaries}

In this section, we introduce the physical-set up required for Theorem \ref{theo:expodecay}. For simplicity of the presentation, we shall assume that the spins are arranged on a square lattice, with a local interaction term acting between only those spins that are the vertices of a common `unit-hypercube'. We shall introduce the notion of a `dual lattice' below, to formally and concisely represent these local interactions between the spins.  It can be observed that more general local interactions on a square lattice can be put in this form by sufficient coarse-graining of lattice sites. The physical set-up for Theorem \ref{theo:qchernoff} is relatively simple, and shall be introduced directly in Section \ref{sec:momentbound}.

Consider a $D$-dimensional real vector space $\mathbb{R}^D$. For a vector $v\in \mathbb{R}^D$, let $v_i$ represent its $i$-th component. For two vectors $v,v'\in \mathbb{R}^D$, define the `$1$-norm distance' as $$\|v-v'\|\defeq \sum_i |v_i-v'_i|.$$ It satisfies the triangle inequality: given $v,v',v" \in \mathbb{R}^D$, we have $$\|v-v"\|\leq \|v-v'\|+\|v'-v"\|.$$  For brevity, we shall refer to $1$-norm distance simply as \textit{distance}.  

For an integer $L>0$, define a lattice $\L_{D,L}$ as the set of all vectors $v\in \mathbb{R}^D$ that satisfy the following: for all $i\in \{1,2\ldots D\}$, it holds that $v_i$ is an integer and $0\leq v_i\leq L$.  Two vectors $v,v'\in \L_{D,L}$ are \textit{neighbours} if $\|v-v'\|=1$. Henceforth, the vectors belonging to $\L_{D,L}$ shall be referred to as \textit{sites}. 

For each site $v\in \L_{D,L}$, we associate a $d$-dimensional Hilbert space $\H_v^d$  and define the full Hilbert space as $\H = \otimes_{v\in \L_{D,L}}\H_v^d$. Local hamiltonian system is conveniently represented using the notion of \textit{dual lattice}. Let $\bar{\L}_{D,L}$ be the set of vectors $w$ such that for all $i\in \{1,2\ldots D\}$, $0<w_i < L$ and $w_i$ is a half integer (that is, $w_i= k+\frac{1}{2}$, for $k$ an integer). For a fixed $w\in \bar{\L}_{D,L}$ and an integer $k$, let $\S(w,k)$ be the set of all sites $v\in \L_{D,L}$ such that: for all $i\in \{1,2\ldots D\}$, $|v_i - w_i| \leq k+\frac{1}{2}$ 

 A local hamiltonian on $\L_{D,L}$ is defined as $H=\sum_{w\in \bar{\L}_{D,L}} h_w$, where $h_w$ is a `$(2k+2)^D$-local' term that acts non-trivially only on sites in $\S(w,k)$ and acts as identity on rest of the sites. The number of sites in $\S(w,k)$ is at most $(2k+2)^D$, justifying $h_w$ as a `$(2k+2)^D$-local' interaction. Following the physical motivation, we shall refer to vectors in $\bar{\L}_{D,L}$ as \textit{interactions}. Figure \ref{fig:quantumlattice} illustrates the notions introduced above for the case when $D=2$. 

\bigskip
\begin{figure}[ht]
\centering
\begin{tikzpicture}[xscale=1.2,yscale=1.2]

\draw[help lines] (0,0) grid (4,4);
\draw[fill] (0,0) circle [radius=0.05];
\draw[fill] (0,1) circle [radius=0.05];
\draw[fill] (0,2) circle [radius=0.05];
\draw[fill] (0,3) circle [radius=0.05];
\draw[fill] (0,4) circle [radius=0.05];
\draw[fill] (1,0) circle [radius=0.05];
\draw[fill] (1,1) circle [radius=0.05];
\draw[fill] (1,2) circle [radius=0.05];
\draw[fill] (1,3) circle [radius=0.05];
\draw[fill] (1,4) circle [radius=0.05];
\draw[fill] (2,0) circle [radius=0.05];
\draw[fill] (2,1) circle [radius=0.05];
\draw[fill] (2,2) circle [radius=0.05];
\draw[fill] (2,3) circle [radius=0.05];
\draw[fill] (2,4) circle [radius=0.05];
\draw[fill] (3,0) circle [radius=0.05];
\draw[fill] (3,1) circle [radius=0.05];
\draw[fill] (3,2) circle [radius=0.05];
\draw[fill] (3,3) circle [radius=0.05];
\draw[fill] (3,4) circle [radius=0.05];
\draw[fill] (4,0) circle [radius=0.05];
\draw[fill] (4,1) circle [radius=0.05];
\draw[fill] (4,2) circle [radius=0.05];
\draw[fill] (4,3) circle [radius=0.05];
\draw[fill] (4,4) circle [radius=0.05];

\draw[fill=blue,blue] (0.5,0.5) circle [radius=0.05];
\draw[fill=blue,blue] (0.5,1.5) circle [radius=0.05];
\draw[fill=blue,blue] (0.5,2.5) circle [radius=0.05];
\draw[fill=blue,blue] (0.5,3.5) circle [radius=0.05];
\draw[fill=blue,blue] (1.5,0.5) circle [radius=0.05];
\draw[fill=blue,blue] (1.5,1.5) circle [radius=0.05];
\draw[fill=blue,blue] (1.5,2.5) circle [radius=0.05];
\draw[fill=blue,blue] (1.5,3.5) circle [radius=0.05];
\draw[fill=blue,blue] (2.5,0.5) circle [radius=0.05];
\draw[fill=blue,blue] (2.5,1.5) circle [radius=0.05];
\draw[fill=blue,blue] (2.5,2.5) circle [radius=0.05];
\draw[fill=blue,blue] (2.5,3.5) circle [radius=0.05];
\draw[fill=blue,blue] (3.5,0.5) circle [radius=0.05];
\draw[fill=blue,blue] (3.5,1.5) circle [radius=0.05];
\draw[fill=blue,blue] (3.5,2.5) circle [radius=0.05];
\draw[fill=blue,blue] (3.5,3.5) circle [radius=0.05];

\node at (2.7,2.5) {$w$};
\node at (0.7,1.5) {$w'$};
\node at (0.2,2.1) {$v$};
\node at (2.2,2.1) {$v_1$};
\node at (2.8,2.1) {$v_2$};
\node at (2.2,2.85) {$v_3$};
\node at (2.8,2.85) {$v_4$};
\end{tikzpicture}
\caption{A lattice $\L_{2,4}$ in dimension $D=2$ with $L=4$. Black dots represent the lattice sites. Blue dots represent the vectors of the dual lattice $\bar{\L}_{2,4}$, which we call interactions. For the interaction $w$, the set $\S(w,0)$ is $\{v_1,v_2,v_3,v_4\}$. The distance between sites $v$ and $v_3$ is $\|v-v_3\|=3$. The distance between interactions $w$ and $w'$ is $\|w-w'\|=3$.}
 \label{fig:quantumlattice}
\end{figure}
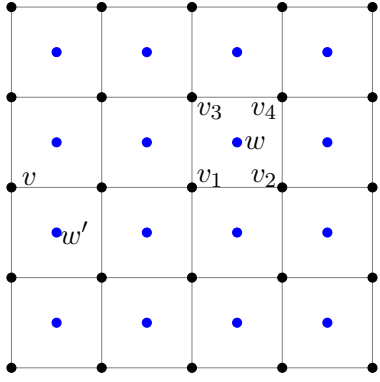
\bigskip

Without loss of generality, we assume that the local terms $h_w$ are positive semi-definite matrices and $\|h_w\|_{\infty}\leq 1$, where $\|.\|_{\infty}$ is the operator norm. Given an operator $A$, \textit{support} of $A$ (called $\text{supp}(A)$) is the set of sites in $\L_{D,L}$  on which $A$ acts non-trivially. We define the \textit{distance} between two operators $A, B$ to be the minimum distance between their respective supports, that is, $\text{min}_{v\in \text{supp}(A), v'\in \text{supp}(B)}\|v-v'\|$.

For a quantum state $\rho \in \H$, the \textit{reduced density matrix} of $\rho$ on a set $T$ of sites is represented as $\rho_T$.  We define the \textit{average energy} of $\rho$ to be $\Tr(\rho H)$, and represent it as $\langle H\rangle_{\rho}$.  For every local term $h_w$, let $\langle h_w\rangle_{\rho} \defeq \Tr(\rho_{\S(w)}h_w)$. Then we have $$\langle H\rangle_{\rho} = \sum_w \langle h_{w}\rangle_{\rho}.$$ 

A state $\rho \in \H$ satisfies $(C,l_0,\sigma)-$ decay of correlation if for any two operators $A,B$ such that distance between $A,B$ is $l\geq l_0$, it holds that $$|\Tr(\rho A\otimes B)-\Tr(\rho A)\Tr(\rho B)|\leq C\|A\|\|B\|e^{-\frac{l}{\sigma}}.$$

Define $\Pi_f$ to be the projector onto eigenspace of $H$ with eigenvalue (energy) equal to $f$. Let $\Pi_{\geq f}$ ($\Pi_{\leq f}$) be the projection onto the subspace which is union of eigenspaces of $H$ with eigenvalues greater (less) than $f$.  The following fact follows from Markov's inequality.  
\begin{fact}
\label{fact:markov}
For every $t,a>0$ and $r$ even, $$\Tr(\rho\Pi_{\geq \langle H\rangle_{\rho} + a}) \leq \frac{\Tr(\rho(H-\langle H\rangle_{\rho})^r)}{(a)^r}.$$
\end{fact}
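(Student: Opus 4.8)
The plan is to reduce the statement to the ordinary (classical) Markov inequality by passing to the spectral decomposition of $H$. First I would write $H = \sum_f f\,\Pi_f$, so that the numbers $\set{\Tr(\rho\Pi_f)}$ indexed by the eigenvalues $f$ form a genuine probability distribution on the spectrum of $H$: each $\Tr(\rho\Pi_f)$ is nonnegative and they sum to $\Tr(\rho\,\id)=1$. I would then introduce the real-valued random variable $E$ that takes the eigenvalue $f$ with probability $\Tr(\rho\Pi_f)$. With this dictionary the mean of $E$ is exactly $\sum_f f\,\Tr(\rho\Pi_f) = \Tr(\rho H) = \langle H\rangle_{\rho}$, and by the definition of $\Pi_{\geq f}$ the left-hand side $\Tr(\rho\Pi_{\geq \langle H\rangle_{\rho}+a})$ is precisely $\Pr[E\geq \langle H\rangle_{\rho}+a]$.

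The second step is the functional-calculus identity that converts the right-hand side into an honest moment. Since $(H-\langle H\rangle_{\rho}\id)^r = \sum_f (f-\langle H\rangle_{\rho})^r\,\Pi_f$, taking the trace against $\rho$ gives $\Tr(\rho(H-\langle H\rangle_{\rho})^r) = \sum_f (f-\langle H\rangle_{\rho})^r\,\Tr(\rho\Pi_f)$, which is exactly the $r$-th central moment $\mathbb{E}\big[(E-\langle H\rangle_{\rho})^r\big]$ of the distribution above. This is the only place where one uses that $\rho$ is a state and $H$ is self-adjoint, and it is a routine verification once the eigenbasis is fixed.

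The third step is the Markov inequality itself, applied to the nonnegative random variable $(E-\langle H\rangle_{\rho})^r$. Here the hypothesis that $r$ is even is essential: it guarantees $(E-\langle H\rangle_{\rho})^r\geq 0$ for every value of $E$, so Markov applies, and it also makes $x\mapsto x^r$ monotone on the relevant half-line, so that (using $a>0$) the event $\set{E\geq \langle H\rangle_{\rho}+a}$ is contained in $\set{(E-\langle H\rangle_{\rho})^r\geq a^r}$. Combining this inclusion with Markov yields $\Pr[E\geq \langle H\rangle_{\rho}+a]\leq \mathbb{E}\big[(E-\langle H\rangle_{\rho})^r\big]/a^r$, which is the claimed bound after substituting the identities from the first two steps.

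There is no genuine obstacle here: the content is entirely elementary, and the lemma is really a bookkeeping device packaging the moment method used in the later sections. The only points that deserve a moment's care are the two uses of the evenness of $r$ (nonnegativity of the power, and the monotonicity giving the set inclusion), and the observation that the parameter $t$ appearing in the hypothesis is vacuous, since it enters neither side of the inequality and may simply be dropped.
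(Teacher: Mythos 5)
Your proposal is correct and follows essentially the same route as the paper: both expand $(H-\langle H\rangle_{\rho})^r$ in the eigenbasis of $H$, identify $\Tr(\rho(H-\langle H\rangle_{\rho})^r)=\sum_f (f-\langle H\rangle_{\rho})^r\Tr(\rho\Pi_f)$, and then discard the terms with $f$ below the threshold (using that $r$ is even so all terms are nonnegative) and lower-bound the remaining ones by $a^r$. Your probabilistic phrasing and your remark that the parameter $t$ is vacuous are both fine; there is nothing further to add.
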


\begin{proof}
We have $(H-\langle H\rangle_{\rho})^r = \sum_f (f-\langle H\rangle_{\rho})^r\Pi_f$, which gives $\Tr(\rho(H-\langle H\rangle_{\rho})^r) = \sum_f (f-\langle H\rangle_{\rho})^r\Tr(\rho\Pi_f)$. This implies,
$$\Tr(\rho(H-\langle H\rangle_{\rho})^r) \geq  (a)^r\sum_{f>\langle H\rangle_{\rho} + a}\Tr(\rho\Pi_f).$$ 
\end{proof}

\section{A combinatorial lemma}
\label{sec:combinatorial}

In this section, we shall prove a combinatorial lemma, which we shall use in Section \ref{sec:expodecay} to prove Theorem \ref{theo:expodecay}. A slight variant of this lemma shall be proved in Appendix \ref{append:combinatorial} and used in Section \ref{sec:momentbound} to prove Theorem \ref{theo:qchernoff}. We recall the definition of the set $\bar{\L}_{D,L}$ from Section \ref{sec:preliminaries} and let the number of interactions in $\bar{\L}_{D,L}$ be $n$. It is easily seen that $n=(L-1)^D$.

Fix an integer $l$. An ordered set $\{w_1,w_2\ldots w_r\}$ of $r$ interactions in $\bar{\L}_{D,L}$ is said to satisfy a property $\P(l)$ if the following holds: for all $w_i$, there exists a $w_j$ such that $\|w_i-w_j\|\leq l$. Let the number of such ordered sets be $N_D(n,r,l)$.

Rest of the section is devoted to the proof of following lemma.
\begin{lemma}
\label{lem:combinatorial}
It holds that $N_D(n,r,l)\leq (4(4l)^Dnr)^{\frac{r}{2}}$.
\end{lemma}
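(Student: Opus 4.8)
The plan is to injectively encode every ordered tuple $(w_1,\dots,w_r)$ satisfying $\P(l)$ by a small amount of data and then count the possible encodings. I introduce the graph $G$ on the index set $\{1,\dots,r\}$ with an edge between $i\neq j$ whenever $\|w_i-w_j\|\leq l$; then $\P(l)$ says precisely that $G$ has \emph{no isolated vertex}. Two elementary facts drive the count. First, a single interaction of $\bar{\L}_{D,L}$ has at most $(2l+1)^D\leq(4l)^D$ interactions within distance $l$ of it (the $\|\cdot\|$-ball of radius $l$ sits inside an $L_\infty$-box of side $2l+1$), and of course at most $n$ interactions exist in total. Second, a graph with no isolated vertex admits a minimum edge cover whose connected components are all \emph{stars}: if some cover-edge had both endpoints of degree $\geq2$, deleting it would leave a smaller cover.

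Fixing such a star-forest cover $F$ of $G$, let $A$ be its set of star-centres and $t=|A|$. Since $\P(l)$ forces every star to contain at least two vertices, $t\leq r/2$. Every index is either a centre or a leaf lying within distance $l$ of its centre, so, given $A$, the tuple is determined by the positions of the $t$ centres (at most $n^{t}$ choices) together with the positions of the $r-t$ leaves, each of which lies in the union $\bigcup_{c\in A}\{w:\|w-w_c\|\leq l\}$. Crucially I count the leaves \emph{collectively}: this union contains at most $\min(t(4l)^D,\,n)$ interactions, so the leaves contribute at most $\min(t(4l)^D,n)^{\,r-t}$ choices. Summing over the centre set (at most $\binom{r}{t}$ choices) and over $t$, and observing every valid tuple is captured this way, I would obtain
\[
N_D(n,r,l)\ \leq\ \sum_{t=1}^{\lfloor r/2\rfloor}\binom{r}{t}\,n^{t}\,\bigl(\min(t(4l)^D,n)\bigr)^{\,r-t}.
\]

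It then remains to estimate this sum, and the decisive point is to retain the $n$-cap on the leaf count. Writing $M=(4l)^D$ and $P=\min(tM,n)$, I split the exponent as $P^{r-t}=P^{r/2}\,P^{r/2-t}$ and bound the first factor using $P\leq tM$ and the second using $P\leq n$ (both exponents are nonnegative since $t\leq r/2$), which gives $P^{r-t}\leq (tM)^{r/2}\,n^{r/2-t}$. Each summand is then at most $\binom{r}{t}\,t^{r/2}M^{r/2}n^{r/2}=\binom{r}{t}\,(t/(4r))^{r/2}\,(4Mnr)^{r/2}$, and since $\binom{r}{t}\leq 2^{r}$ and $t\leq r/2$ this is at most $2^{r}(1/8)^{r/2}\bigl(4(4l)^Dnr\bigr)^{r/2}=2^{-r/2}\bigl(4(4l)^Dnr\bigr)^{r/2}$. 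As there are at most $r/2$ terms and $\tfrac{r}{2}2^{-r/2}\leq1$ for even $r$, the whole sum is at most $\bigl(4(4l)^Dnr\bigr)^{r/2}$, as required.

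The step I expect to be the crux is the reduction to a star-forest cover combined with the collective counting of the leaves. A naive "each index points to a witness'' encoding would cost a factor $r^{r}$ for the skeleton, and even recording a leaf-to-centre assignment separately produces the sum $\sum_t\binom{r}{t}t^{r-t}n^{t}(4l)^{D(r-t)}$, which genuinely exceeds $\bigl(4(4l)^Dnr\bigr)^{r/2}$ for large $r$. Naming a star by its centre alone, and counting all leaves together inside the union of centre-balls \emph{capped by the total number $n$ of interactions}, is exactly what removes the spurious assignment factor and balances the two regimes $t(4l)^D\lessgtr n$, so that the constant $4$ suffices.
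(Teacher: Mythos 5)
Your proof is correct, but it takes a genuinely different route from the paper's. The paper encodes each tuple as a \emph{selection}: an explicit three-stage algorithm (initialization, pointer creation, update) assigns a bit $b_i$ to each index so as to certify that at most $r/2$ indices are ``free'' while every other index lies within distance $2l$ of some \emph{earlier} one -- the radius doubles because two free indices sharing a witness are merged via the triangle inequality -- and the count $\sum_{u\le r/2}\binom{r}{u}n^u\bigl((4l)^Dr\bigr)^{r-u}$ pays a factor $r$ per constrained index for the choice of witness, with the regime $n\le r(4l)^D$ handled by a separate trivial bound. You instead invoke the classical structural fact that a minimum edge cover of a graph without isolated vertices is a star forest, which yields the ``at most $r/2$ free indices'' conclusion immediately (one centre per star, each star having at least two vertices), keeps the radius at $l$, and -- the genuinely distinct ingredient -- counts all leaves collectively inside the union of the centre-balls capped by $n$; that cap replaces the paper's case split, and your factorization $P^{r-t}\le (tM)^{r/2}n^{r/2-t}$ lands on exactly the same constant $4$. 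Both arguments are sound and both adapt to the neighbourhood version needed for Lemma \ref{lem:combinatorial1}; yours is shorter and rests on a standard lemma, while the paper's is self-contained. (Both implicitly assume $l\ge 1$, so that a ball of radius $l$ contains at most $(4l)^D$ interactions; this is harmless in the applications.)
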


We start with the following definition that we shall extensively use.
\begin{definition}
A \textit{selection} is an ordered set $\{(b_1,x_1),(b_2,x_2)\ldots (b_r,x_r)\}$, where $b_i\in \{0,1\}$ and $x_i\in \bar{\L}_{D,L}$, that satisfies the following constraints:
\begin{enumerate}
\item If $b_i=0$, then $x_i$ can be any interaction in $\bar{\L}_{D,L}$ and if $b_i=1$, $x_i$ has to satisfy $\|x_i-x_j\|\leq 2\cdot l$ for some $j<i$.
\item Number of $i$ for which $b_i=0$ is at most $\frac{r}{2}$. 
\end{enumerate}
\end{definition}

We show the following lemma from which the proof of Lemma \ref{lem:combinatorial} shall follow immediately.

\begin{lemma}
\label{lem:selection}
Every ordered set $\{w_1,w_2\ldots w_r\}$ that satisfies property $\P(l)$ can be mapped to a \textit{selection} in such a way that for any two distinct sets satisfying $\P(l)$, the corresponding \textit{selections} are distinct. 
\end{lemma}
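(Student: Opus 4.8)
The plan is to construct the required injection explicitly and then verify the two defining constraints of a selection, the second of which carries all the content. Given an ordered set $\{w_1,\ldots,w_r\}$ satisfying $\P(l)$, I would simply record each interaction by setting $x_i = w_i$, and use the bit $b_i$ to flag whether $w_i$ lies close to an element appearing earlier in the order. Concretely, set $b_i = 1$ if there exists $j < i$ with $\|w_i - w_j\| \leq 2l$, and set $b_i = 0$ otherwise. With this rule the first constraint in the definition of a selection holds by construction, and the overall map is trivially injective: from the selection one reads off $w_i = x_i$, so distinct ordered sets produce distinct $x$-sequences and hence distinct selections.

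The substance is the second constraint, that the number of indices $i$ with $b_i = 0$ is at most $\frac{r}{2}$. Call such an index \emph{free}. The key is to pair each free index with a distinct non-free index, exploiting the factor-of-two slack between the threshold $l$ in $\P(l)$ and the threshold $2l$ in the definition of a selection. For a free index $i$, property $\P(l)$ furnishes some $w_j$ with $j \neq i$ and $\|w_i - w_j\| \leq l$; since $l \leq 2l$, this partner cannot occur earlier than $i$ (otherwise $b_i$ would be $1$), so $j > i$. Fix one such partner $j_i$ for every free $i$. First, $j_i$ is never free, because the element $w_i$ occurs before $w_{j_i}$ and satisfies $\|w_{j_i} - w_i\| \leq l \leq 2l$, forcing $b_{j_i} = 1$. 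Second, the assignment $i \mapsto j_i$ is injective: if $j_i = j_{i'}$ for free indices $i < i'$, then by the triangle inequality $\|w_i - w_{i'}\| \leq \|w_i - w_{j_i}\| + \|w_{j_i} - w_{i'}\| \leq 2l$, so the earlier element $w_i$ witnesses $b_{i'} = 1$, contradicting that $i'$ is free.

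Injectivity of $i \mapsto j_i$ into the set of non-free indices gives (number of free indices) $\leq$ (number of non-free indices), hence at most $\frac{r}{2}$ free indices, which is exactly the second constraint. Thus every set satisfying $\P(l)$ maps to a genuine selection, and the map is injective, proving the lemma.

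I expect the only delicate point to be the pairing argument: one must use the $l$ versus $2l$ gap in both directions, once to force each free index's $\P(l)$-partner to lie strictly later and to be non-free, and once (through the triangle inequality) to rule out two free indices sharing a partner. Everything downstream is routine: to obtain $N_D(n,r,l) \leq (4(4l)^D n r)^{\frac{r}{2}}$ in Lemma \ref{lem:combinatorial} one counts selections, where each of the at most $\frac{r}{2}$ free coordinates contributes a factor $n$, while each non-free coordinate is confined to a ball of radius $2l$ (of size at most $(4l+1)^D$ lattice points) around one of fewer than $r$ earlier points.
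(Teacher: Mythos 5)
Your proof is correct and uses the same underlying mechanism as the paper's: exploit the gap between the threshold $l$ in $\P(l)$ and the threshold $2l$ in the definition of a \textit{selection}, together with the triangle inequality, to pair each ``free'' index injectively with a later non-free index. The paper packages this as a three-stage algorithm (initialize bits with threshold $l$, create pointers, then flip bits of free indices sharing a pointer), whereas you set the bits with threshold $2l$ from the start and rule out shared partners directly — a cleaner presentation of the same argument.
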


\begin{proof}
We assign a \textit{selection} to an ordered set $\{v_1,v_2\ldots v_r\}$ satisfying $\P(l)$ using the algorithm below.

\bigskip
\fbox{
\parbox{\textwidth}{
\small

\bigskip 
\textbf{Initialization}
\begin{itemize}
\item Set $i=1$ and $b_i=0, x_i=w_i$.
\item While $(i\leq r)$, do: 
\item $i\rightarrow i+1$. Set $x_i=w_i$. Set $b_i=0$ if there is no $j<i$ such that $\|w_i-w_j\|\leq l$.  Else set $b_i=1$. 
\item End while.
\end{itemize}

\textbf{Pointer creation}
\begin{itemize}
\item Define a relation $R:\{1,2\ldots r\}\rightarrow \{0,1,2\ldots r\}$ as follows. 
\item Set $i=1$. While $(i\leq r)$, do:
\item If $b_i=1$, set $R(i)=0$. If $b_i=0$, find the smallest $j>i$ such that $b_j=1$ and $\|x_j-x_i\|\leq l$ (such a $j$ exists due to property $\P(l)$). Set $R(i)=j$. Set $i \rightarrow i+1$.
\item End while. 
\end{itemize}

\textbf{Update}
\begin{itemize}
\item Let $\mathcal{S}$ be the set of all subsets of $\{1,2,\ldots r\}$ which have cardinality at least $2$. 
\item For each element $S\in \mathcal{S}$, do: 
\item Let $s$ be the cardinality of $S$ and $i_1,i_2\ldots i_s$ be its elements arranged in increasing order. If it holds that $b_{i_1}=b_{i_1}=\ldots b_{i_s}=0$ and $R(i_1)=R(i_2)=\ldots R(i_s) > 0$: set $b_{i_2}=b_{i_3}=\ldots b_{i_s}=1$. 
\item End For.
\end{itemize}
}
}
\bigskip

We show that above algorithm terminates and assigns a \textit{selection} to each ordered set satisfying property $\P(l)$. 
\begin{enumerate}
\item Consider the running of algorithm during the step \textbf{Initialization}. Condition $1$ of a \textit{selection} holds: for every $i$ for which there is a $j<i$ such that $\|x_i-x_j\|\leq l$, we have set $b_i=1$. But we haven't constructed a \textit{selection} yet, since condition $2$ may not be satisfied. 

\item After the step \textbf{Pointer creation}, it may be possible that there exist indices $i_1,i_2\ldots i_s$ (for some $s< r$) such that $b_{i_1}=b_{i_2}=\ldots b_{i_s}=0$, $R(i_1)=R(i_2)=\ldots R(i_s) > 0$ and $i_s> i_{s-1}> \ldots i_1$. In this case, we find using triangle inequality that $\|w_{i_2}-w_{i_1}\|\leq \|w_{i_2}-w_{R(i_2)}\|+\|w_{R(i_2)}-w_{i_1}\|=\|w_{i_2}-w_{R(i_2)}\|+\|w_{R(i_1)}-w_{i_1}\|\leq 2l$. Similarly, $\|w_{i_3}-w_{i_1}\|\leq 2l, |w_{i_4}-w_{i_1}|\leq 2l, \ldots |w_{i_s}-w_{i_1}|\leq 2l$. 

Thus, the step \textbf{Update} sets $b_{i_2}=b_{i_3}=\ldots b_{i_s}=1$, recognizing the fact that each of the points $w_{i_2},w_{i_3}\ldots w_{i_s}$ are at a lattice distance of at most $2l$ from $w_{i_1}$. This ensures that condition $1$ of \textit{selection} is still satisfied.
 
\item After the step \textbf{Update} terminates, condition $2$ of \textit{selection} is now satisfied as well. We now have that for every $i$ with $b_i=0$, there is no other $i'$ such that $R(i)=R(i')$ and $b_i=b_{i'}=0$. Furthermore, $b_{R(i)}=1$. Thus, number of $i$ with $b_i=0$ is at most as large as the number of $j$ with $b_j=1$.
\end{enumerate}

Lemma follows as two distinct ordered sets satisfying $\P(l)$ are not assigned the same \textit{selection}.
\end{proof}
 
Now we prove Lemma \ref{lem:combinatorial}.

\begin{proof}[Proof of Lemma \ref{lem:combinatorial}]
For $n\leq r(4l)^D$, we clearly have $N_D(n,r,l)\leq n^r \leq ((4l)^Dnr)^{\frac{r}{2}} < (4(4l)^Dnr)^{\frac{r}{2}}$.

So we assume $n> r(4l)^D$. We bound the number of \textit{selections}, which gives the desired upper bound on $N_D(n,r,l)$ using Lemma \ref{lem:selection}. 

Consider those \textit{selections} for which number of $i$ such that $b_i=0$ is $u$. For each $i$ with $b_i=0$, number of possible choices of $x_i$ is $n$. For each $i$ with $b_i=1$,  number of possible choices of $x_i$ is at most $(4l)^Dr$ (as there are at most $(4l)^D$ points $x_j\in \L_{D,L}$ that satisfy $\|x_i-x_j\|\leq 2l$ for a given $x_i$ \footnote{This is a very crude upper bound and can be found as follows. The number of non-negative integers $\{a_1,a_2\ldots a_D\}$ such that $\sum_i a_i \leq 2l$ is at most $(2l)^D$. Thus, number of integers $\{a_1,a_2\ldots a_D\}$ such that $\sum_i |a_i| \leq 2l$ is at most $2^D(2l)^D$.}). Hence total number of such \textit{selections} is at most ${r \choose u}n^{u}((4l)^Dr)^{r-u}$. Since $u\leq \frac{r}{2}$, total number of \textit{selections} is at most $$\sum_{u=0}^{\frac{r}{2}}{r \choose u}n^{u}((4l)^Dr)^{r-u} \leq \sum_{u=0}^{\frac{r}{2}}{r \choose u}n^{\frac{r}{2}}((4l)^Dr)^{\frac{r}{2}}<2^r((4l)^Dnr)^{\frac{r}{2}}.$$  

This proves the lemma.
\end{proof}

\section{Energy distribution of states that satisfy an exponential decay of correlation}
\label{sec:expodecay}
Consider a state $\rho$ that satisfies $(C,l_0,\sigma)-$ decay of correlation and the hamiltonian $H=\sum_{w\in \bar{\L}_{D,L}} h_w$, where each term $h_w$ is $(2k+2)^D$-local, that is, it acts non-trivially only on sites in $\S(w,k)$. Let $g_w\defeq h_w - \Tr(\rho h_w)\text{I}$.  We prove following bound on $r$-th moment. 

\begin{lemma}
\label{expodecaymoment}
Given the state $\rho$ that satisfies $(C,l_0,\sigma)-$ decay of correlation, it holds that $$\Tr(\rho(H-\langle H\rangle)^r)\leq (4(4l_0+8Dk)^Dnr)^{\frac{r}{2}}+Ce^{\frac{2Dk}{\sigma}}\sigma\cdot(4(\frac{D\sigma}{2})^Dnr^{D+1})^{\frac{r}{2}}.$$ 
\end{lemma}  
\begin{proof}
Consider,

\begin{eqnarray}
\label{eq:seriesexpansion}
\Tr(\rho (\sum_{w\in \bar{\L}_{D,L}} g_w)^r) &=& \sum_{w_1,w_2\ldots w_r} \Tr(\rho g_{w_1}g_{w_2}\ldots g_{w_r})
\end{eqnarray}

For every ordered set $\{w_1,w_2\ldots w_r\}$ define the quantity $D(w_1,w_2\ldots w_r)\defeq \text{max}_i (\text{min}_{j\neq i}|w_i-w_j|)$. This is the distance of farthest interaction from rest of the interactions in the ordered set.

For an integer $l>0$, define $\T(l)$ as the collection\footnote{to avoid confusion, we call $\T(l)$ a `collection' instead of a `set'} of all sets $\{w_1,w_2\ldots w_r\}$ that satisfy $D(w_1,w_2\ldots w_r)= l$. Now, fix a set $\{w_1,w_2\ldots w_r\}\in \T(l)$. Without loss of generality, suppose that $w_1$ is an interaction at the distance $l$ from rest of the interactions. The distance between operator $g_{w_1}$ and $g_{w_i}$, for any $i\neq 1$, is at least $l- 2Dk$, as the distance from $w_i$ to any site in $\S(w_i,k)$ is at most $Dk$. Then from $(C,\sigma,l_0)-$ decay of correlation and the relation $\Tr(\rho g_{w_1})=0$, it holds that $$\Tr(\rho g_{w_1}g_{w_2}\ldots g_{w_r})\leq \Tr(\rho g_{w_1})\cdot\Tr(\rho g_{w_2}\ldots g_{w_r})+ Ce^{-\frac{l-2Dk}{\sigma}} = Ce^{-\frac{l-2Dk}{\sigma}},$$ as long as $l-2Dk \geq l_0$.

Now, the number of sets in the collection $\T(l)$ is at most $N_D(n,r,l)$ which is upper bounded by $(4(4l)^Dnr)^{\frac{r}{2}}$ (Lemma \ref{lem:combinatorial}). Thus we have

\begin{eqnarray}
\label{eq:expodecaycalc}
\sum_{w_1,w_2\ldots w_r} \Tr(\rho g_{w_1}g_{w_2}\ldots g_{w_r}) &=& \sum_l\sum_{(w_1,w_2\ldots w_r)\in \T(l)} \Tr(\rho g_{w_1}g_{w_2}\ldots g_{w_r}) \nonumber\\&=& \sum_{l\leq l_0+2Dk}\sum_{(w_1,w_2\ldots w_r)\in \T(l)} \Tr(\rho g_{w_1}g_{w_2}\ldots g_{w_r}) \nonumber\\&+& \sum_{l> l_0+2Dk}\sum_{(w_1,w_2\ldots w_r)\in \T(l)} \Tr(\rho g_{w_1}g_{w_2}\ldots g_{w_r}) \nonumber\\ &\leq& N_D(n,r,l_0+2Dk) + \sum_{l>l_0+2Dk} N_D(n,r,l)\cdot Ce^{-\frac{l-2Dk}{\sigma}}\nonumber\\ &\leq& (4(4l_0+8Dk)^Dnr)^{\frac{r}{2}}+Ce^{\frac{2Dk}{\sigma}}\sum_{ l\geq l_0+2Dk} (4(4l)^Dnr)^{\frac{r}{2}}e^{-\frac{l}{\sigma}}\nonumber\\&\leq& (4(4l_0+8Dk)^Dnr)^{\frac{r}{2}}+Ce^{\frac{2Dk}{\sigma}}(4nr)^{\frac{r}{2}}\sum_{l\geq 1} l^{\frac{rD}{2}}e^{-\frac{l}{\sigma}}
\end{eqnarray}
 Now, we evaluate
\begin{eqnarray*}
\sum_{l\geq 1} l^{\frac{rD}{2}}e^{-\frac{l}{\sigma}} &\leq& \int_{0}^{\infty}  l^{\frac{rD}{2}}e^{-\frac{l}{\sigma}}dl = \sigma^{\frac{rD}{2}+1}\int_{0}^{\infty}s^{\frac{rD}{2}}e^{-s}ds \leq \sigma^{\frac{rD}{2}+1}(\frac{rD}{2})^{\frac{rD}{2}}. 
\end{eqnarray*}

Using this in Equation (\ref{eq:expodecaycalc}), we obtain 
 $$\sum_{w_1,w_2\ldots w_r} \Tr(\rho g_{w_1}g_{w_2}\ldots g_{w_r})\leq (4(4l_0+8Dk)^Dnr)^{\frac{r}{2}}+Ce^{\frac{2Dk}{\sigma}}\sigma(4(\frac{D\sigma}{2})^Dnr^{D+1})^{\frac{r}{2}}.$$

\end{proof}

Now we proceed to state Theorem \ref{theo:expodecay} formally and provide its proof. 
\begin{theorem}
\label{formaltheo:expodecay}
Consider a quantum state $\rho$ that satisfies $(C,l_0,\sigma)-$decay of correlation and has average energy $\langle H\rangle_{\rho}$. 

For $\sqrt{\frac{8e(4l_0+8Dk)^{D+1}}{nD\sigma}}\geq a \geq \sqrt{\frac{8e(4l_0+8Dk)^D}{n}}$ (if the range exists) it holds that, 
$$\Tr(\rho\Pi_{\geq \langle H\rangle_{\rho} + na}) \leq (1+C\sigma e^{\frac{2Dk}{\sigma}})e^{-\frac{na^2}{8e(4l_0+8Dk)^D}} \text{ and } \Tr(\rho\Pi_{\leq \langle H\rangle_{\rho} - na}) \leq (1+C\sigma e^{\frac{2Dk}{\sigma}})e^{-\frac{na^2}{8e(4l_0+8Dk)^D}}.$$ For $a\geq \sqrt{\frac{(4l_0+8Dk)^{D+1}}{D\sigma n}}$ it holds that, 
$$\Tr(\rho\Pi_{\geq \langle H\rangle_{\rho} + na}) \leq (1+C\sigma e^{\frac{2Dk}{\sigma}}) e^{-(\frac{na^2}{8e(D\sigma)^D})^{\frac{1}{D+1}}} \text{ and } \Tr(\rho\Pi_{\leq \langle H\rangle_{\rho} - na}) \leq (1+C\sigma e^{\frac{2Dk}{\sigma}}) e^{-(\frac{na^2}{8e(D\sigma)^D})^{\frac{1}{D+1}}}.$$

\end{theorem}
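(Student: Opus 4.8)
The plan is to feed the moment estimate of Lemma~\ref{expodecaymoment} into the Markov-type bound of Fact~\ref{fact:markov} and then optimize over the even integer $r$. Writing $K \defeq 4l_0 + 8Dk$ and substituting the threshold $na$ for $a$ in Fact~\ref{fact:markov}, we get for every even $r$
\[
\Tr(\rho\Pi_{\geq \langle H\rangle + na}) \leq \frac{\Tr(\rho(H-\langle H\rangle)^r)}{(na)^r} \leq \left(\frac{4K^D r}{na^2}\right)^{\frac r2} + C\sigma e^{\frac{2Dk}{\sigma}}\left(\frac{4(\frac{D\sigma}{2})^D r^{D+1}}{na^2}\right)^{\frac r2},
\]
after cancelling the common factors of $n$ between numerator and denominator. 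Everything now reduces to choosing $r$ so that the relevant base is at most $1/e$, since a base equal to $1/e$ turns $(\cdot)^{r/2}$ into the clean exponential $e^{-r/2}$.

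First I would treat the Gaussian regime. Setting the first base equal to $1/e$, that is $r = \frac{na^2}{4eK^D}$, makes the first term equal $e^{-na^2/(8eK^D)}$; the lower bound $a \geq \sqrt{8eK^D/n}$ is exactly what guarantees $r \geq 2$, so an even integer of this order exists. The upper bound $a \leq \sqrt{8eK^{D+1}/(nD\sigma)}$ caps $na^2$, hence caps $r$, hence keeps the second base below $1/e$ for this same $r$, so the second term is bounded by $C\sigma e^{2Dk/\sigma}$ times the same exponential. Adding the two contributions yields the claimed $(1 + C\sigma e^{2Dk/\sigma})\,e^{-na^2/(8eK^D)}$.

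For the stretched-exponential regime I would instead kill the second term: choosing $r = 2\left(\frac{na^2}{8e(D\sigma)^D}\right)^{1/(D+1)}$ sets the second base to $1/e$ and makes the second term equal $e^{-(na^2/(8e(D\sigma)^D))^{1/(D+1)}}$. Here the first base scales like $(na^2)^{-D/(D+1)}$, so increasing $a$ \emph{shrinks} it; the lower bound on $a$ is what simultaneously forces this $r$ to be at least $2$ and drives the first base below $1/e$, making the Gaussian term subdominant and absorbable with coefficient $1$. Collecting the two terms gives the second displayed bound. The downward tails $\Tr(\rho\Pi_{\leq \langle H\rangle - na})$ follow verbatim: since $r$ is even, $(f - \langle H\rangle)^r \geq (na)^r$ also for $f \leq \langle H\rangle - na$, so the identical moment estimate controls the lower tail.

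The one genuinely fiddly point — and the part I expect to be the main obstacle — is the bookkeeping that in each regime the subdominant term is actually dominated by the target exponential. This is a competition between the $r^{r/2}$-type growth of the first term and the much faster $r^{(D+1)r/2}$-type growth of the second, and it is precisely the two inequalities on $a$ that pin down which of the two choices of $r$ is admissible. Verifying the crossover — the two choices meeting near $a \sim \sqrt{K^{D+1}/(nD\sigma)}$ — and absorbing the $\mathcal{O}(1)$ losses incurred by rounding the optimal real $r$ down to an even integer is where the care is needed; none of the individual inequalities is deep, but the constants have to be tracked consistently so that the prefactor collapses to $(1 + C\sigma e^{2Dk/\sigma})$ in both regimes.
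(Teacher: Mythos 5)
Your proposal is correct and takes essentially the same route as the paper: Fact~\ref{fact:markov} applied to the $r$-th central moment from Lemma~\ref{expodecaymoment}, with the same two choices of $r$ (one forcing the first base to $1/e$ in the Gaussian regime, one forcing the second base to $1/e$ in the stretched-exponential regime) and the same use of the two constraints on $a$ to keep the other term subdominant. The only cosmetic difference is the lower tail, which you get directly from the evenness of $r$ in the Markov argument, whereas the paper passes to the complementary Hamiltonian $H'=\sum_w(\mathrm{I}-h_w)$ and uses $\Pi'_{\geq f}=\Pi_{\leq n-f}$; both are valid.
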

\begin{proof}

Using Fact \ref{fact:markov} and Lemma \ref{expodecaymoment} we have, 
$$\Tr(\rho\Pi_{\geq \langle H\rangle_{\rho} + na}) \leq (\frac{4(4l_0+8Dk)^Dr}{na^2})^{\frac{r}{2}}+Ce^{\frac{2Dk}{\sigma}}\sigma(\frac{4(\frac{D\sigma}{2})^D r^{D+1}}{na^2})^{\frac{r}{2}}$$ 

Consider the following two cases.
\begin{itemize}
\item  $a\geq \sqrt{\frac{(4l_0+8Dk)^{D+1}}{D\sigma n}}$, or equivalently $\frac{(4l_0+8Dk)^{D+1}}{Dna^2\sigma}<1$ 

Then we set 
$r=2\lceil (\frac{na^2}{8e(D\sigma)^D})^{\frac{1}{D+1}} \rceil$, where $\lceil . \rceil$ denotes the ceiling operation (rounding to the nearest larger integer) to obtain

\begin{eqnarray*}
\Tr(\rho\Pi_{\geq \langle H\rangle_{\rho} + na}) &\leq& (\frac{4(4l_0+8Dk)^Dr}{na^2})^{\frac{r}{2}}+Ce^{\frac{2Dk}{\sigma}}\sigma(\frac{4(\frac{D\sigma}{2})^D r^{D+1}}{na^2})^{\frac{r}{2}} \\ &\leq& (\frac{1}{e}(\frac{(4l_0+8Dk)^{D+1}}{Dna^2\sigma})^{\frac{D}{D+1}})^{\frac{r}{2}}+Ce^{\frac{2Dk}{\sigma}}\sigma(\frac{1}{e})^{\frac{r}{2}} \\ &\leq& (1+C\sigma\cdot e^{\frac{2Dk}{\sigma}}) e^{-(\frac{na^2}{8e(D\sigma)^D})^{\frac{1}{D+1}}}
\end{eqnarray*}

The last inequality follows from the assumption: $\frac{(4l_0+8Dk)^{D+1}}{Dna^2\sigma}<1$.

\item $a\leq\sqrt{\frac{8e(4l_0+8Dk)^{D+1}}{nD\sigma}}$ and $a \geq \sqrt{\frac{8e(4l_0+8Dk)^D}{n}}$, or equivalently
$\frac{na^2}{8e(4l_0+8Dk)^D}\geq1$ and \newline $\frac{D\sigma na^2}{8e(4l_0+8Dk)^{(D+1)}}\leq1$.

 We set $r=2\lceil \frac{na^2}{8e(4l_0+8Dk)^D}\rceil$ to obtain
\begin{eqnarray*}
\Tr(\rho\Pi_{\geq \langle H\rangle_{\rho} + na}) &\leq& (\frac{4(4l_0+8Dk)^Dr}{na^2})^{\frac{r}{2}}+Ce^{\frac{2Dk}{\sigma}}\sigma(\frac{4(\frac{D\sigma}{2})^D r^{D+1}}{na^2})^{\frac{r}{2}} \\ &\leq& (\frac{1}{e})^{\frac{r}{2}} + Ce^{\frac{2Dk}{\sigma}}\sigma(\frac{1}{e}(\frac{D\sigma na^2}{8e(4l_0+8Dk)^{(D+1)}})^D)^{\frac{r}{2}}\\ &\leq& (1+C\sigma e^{\frac{2Dk}{\sigma}})e^{-\frac{na^2}{8e(4l_0+8Dk)^D}}
\end{eqnarray*}

Last inequality follows from the assumption: $\frac{D\sigma na^2}{8e(4l_0+8Dk)^{(D+1)}}<1$.
\end{itemize}

For second part of the theorem, consider the hamiltonian $$H'=\sum_{w\in W_{k,m}}\text{I}-h_w.$$ Define $\langle H'\rangle_{\rho} \defeq \Tr(\rho H') = n-\langle H\rangle_{\rho}$. Let $\Pi'_{\geq f}$ be the projector onto subspace with eigenvalues of $H'$ larger than $f$. Same analysis as above for $H'$ in place of $H$, along with the relation $\Pi'_{\geq f} = \Pi_{\leq n-f}$ completes the proof.

\end{proof}

\section{Energy distribution of a product state}
\label{sec:momentbound}

In this section, we introduce the physical set-up for Theorem \ref{theo:qchernoff} and also provide its proof. We shall continue using the notations $H$ and $h$ for the hamiltonian and its local term, as this notation is restricted only to this section. 

Consider a collection $\C$ of spins, such that a $d$-dimensional Hilbert space $\H_s^d$ is associated to each spin $s\in\C$. Let full Hilbert space $\H$ be defined as $\H=\otimes_s \H_s^d$. For an integer $k>0$, let $S_k$ be the set of all subsets of $\C$  of size at most $k$. For an integer $m>0$, let $W_{k,m}$ be a subset of $S_k$ defined as follows (note that $W_{k,m}$ is also a set of subsets of $\C$) : for each $w\in W_{k,m}$ the number of $w'\in W_{k,m}$ such that $|w'\cap w| > 0$ is at most $m$. For each $w\in W_{k,m}$, let $\N(w)$ be the set of all $w'\in W_{k,m}$ such that $|w\cap w'|>0$. Elements of $\N(w)$ shall be referred to as \textit{neighbours} of $w$. The set-up has been depicted in Figure \ref{fig:spincollection}.

\bigskip
\begin{figure}[ht]
\centering
\begin{tikzpicture}[xscale=1.4,yscale=1.4]

\draw[fill=blue,blue] (0.5,1) circle [radius=0.05];
\draw[fill=blue,blue] (0.5,2) circle [radius=0.05];
\draw[fill=blue,blue] (1.3,3) circle [radius=0.05];
\draw[fill=blue,blue] (2.6,3) circle [radius=0.05];
\draw[fill=blue,blue] (3.5,2) circle [radius=0.05];
\draw[fill=blue,blue] (3.5,1) circle [radius=0.05];
\draw[fill=blue,blue] (2.6,0) circle [radius=0.05];
\draw[fill=blue,blue] (1.3,0) circle [radius=0.05];
\draw[fill=blue,blue] (1.4,1) circle [radius=0.05];
\draw[fill=blue,blue] (2.5,1) circle [radius=0.05];
\draw[fill=blue,blue] (1.4,2) circle [radius=0.05];
\draw[fill=blue,blue] (2.5,2) circle [radius=0.05];
\draw[thick, fill=lightgray] (0.5,1) -- (1.3,0) -- (1.4,1) -- (0.5,1);
\draw[thick, fill=lightgray] (1.4,1) -- (0.5,2) -- (1.4,2) -- (1.4,1);
\draw[thick, fill=lightgray] (1.4,2) -- (1.3,3) -- (2.5,2) -- (1.4,2);
\draw[thick, fill=lightgray] (2.5,2) -- (2.6,3) -- (3.5,2) -- (2.5,2);
\draw[thick, fill=lightgray] (3.5,2) -- (3.5,1) -- (2.5,1) -- (3.5,2);
\draw[thick, fill=lightgray] (2.6,0) -- (3.5,1) -- (2.5,1) -- (2.6,0);
\node at (1.1,0.5) {$w_1$};
\node at (1.1,1.6) {$w_2$};
\node at (1.7,2.2) {$w_3$};
\node at (2.9,2.2) {$w_4$};
\node at (3.2,1.3) {$w_5$};
\node at (2.8,0.6) {$w_6$};
\end{tikzpicture}
\caption{Collection of spins (blue dots) with local terms (gray triangles). There is no underlying lattice structure. Each local term is $3$-local, thus $k=3$ and each local term has at most $2$ neighbours, thus $m=2$. The set $W_{3,2}$ in above figure is $\{w_1,w_2,w_3,w_4,w_5,w_6\}$, where each $w_i$ is the set of spins that form vertices of corresponding triangle. Note that for a fixed $k,m$ (here $k=3,m=2$), there can be several choices of the set $W_{k,m}$ and each choice gives a different hamiltonian $H$. Neighbours of $w_4$ are $\N(w_4)=\{w_3,w_5\}$. Each spin in the figure is in the support of at most $2$ local terms. Thus we have $g=2$, where $g$ is defined in Subsection \ref{subsec:qchernoff}.}
 \label{fig:spincollection}
\end{figure}
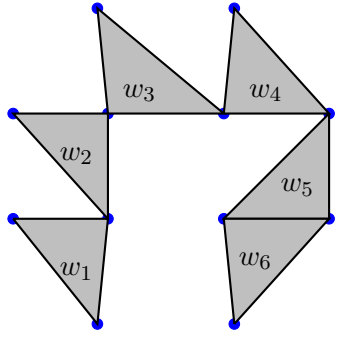
\bigskip

Let the hamiltonian $H$ be defined as: $$H=\sum_{w\in W_{k,m}} h_w,$$ where $h_w$ acts non-trivially only on spins in $w$ and acts trivially on rest of the spins. Further, we assume that $\|h_w\|_{\infty} \leq 1$.

The definition of $W_{k,m}$ thus translates to the assumption that:
\begin{enumerate}
\item Each `local term' $h_w$ acts non-trivially on at most $k$ particle, and hence is $k$-local.
\item For each $h_w$, the number of $h_{w'}$ such that the supports of $h_w$ and $h_{w'}$ intersect, is at most $m$. 
\end{enumerate} 

Let $\rho \in \H$ be a product state, that is, $\rho = \Pi_{s\in \C} \rho_{s}$ and support of each $\rho_s$ is exactly the spin $s$. Let the reduced density matrix of $\rho$ on a subset $T\subseteq \C$ of spins be denoted in the usual way as $\rho_T$. 

We bound the moment function $\Tr(\rho(H-\langle H\rangle_{\rho})^r)$ for an even $r$ to be chosen later and use it to prove Theorem \ref{theo:qchernoff}. Define $g_w\defeq h_{w}-\langle h_{w}\rangle_{\rho} \text{I}$. 

We shall prove the following lemma.
\begin{lemma}
\label{lem:kmomentbound}
Let $n\defeq |W_{k,m}|$ be the number of local terms. Given the product state $\rho = \Pi_{s\in \C} \rho_s$, it holds that
$$\Tr(\rho(\sum_{w\in W_{k,m}}g_w)^r) \leq (4m^2nr)^{\frac{r}{2}}.$$
\end{lemma}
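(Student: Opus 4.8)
The plan is to imitate the structure of the proof of \cref{expodecaymoment}: expand the $r$-th power, discard the summands that vanish, bound each surviving summand in magnitude by $1$, and count the surviving summands using a neighbourhood analogue of \cref{lem:combinatorial}. The role played there by the decay of correlation will here be played by the \emph{exact} factorization of the product state $\rho$.

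First I would write
$$\Tr(\rho(\sum_{w\in W_{k,m}}g_w)^r) = \sum_{w_1,\ldots,w_r}\Tr(\rho\, g_{w_1}g_{w_2}\cdots g_{w_r}),$$
the outer sum ranging over all ordered $r$-tuples of interactions in $W_{k,m}$. The key structural observation is that a summand vanishes as soon as one of its factors is supported away from all the others: if some index $i$ satisfies $w_i\cap w_j=\emptyset$ for every $j\neq i$, then $\Tr(\rho\, g_{w_1}\cdots g_{w_r})=0$. To see this, note that $g_{w_i}$ commutes with every other factor (disjoint supports), so it may be pulled out of the product; since $\rho=\Pi_{s\in\C}\rho_s$ factorizes as $\rho_{w_i}\otimes\rho_{\comp{w_i}}$ while the remaining factors act entirely on $\comp{w_i}$, the trace splits as $\Tr(\rho_{w_i}g_{w_i})\cdot\Tr(\rho_{\comp{w_i}}\prod_{j\neq i}g_{w_j})$, and the first factor is exactly $\langle g_{w_i}\rangle_\rho=0$ by the definition $g_w=h_w-\langle h_w\rangle_\rho\text{I}$.

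Hence only those tuples survive for which every index $i$ has a partner $j\neq i$ with $w_i\cap w_j\neq\emptyset$, i.e. $w_j\in\N(w_i)$; this is precisely the neighbourhood version of property $\P(l)$. For each surviving tuple I would bound the summand crudely: since each $h_w$ is positive semidefinite with $\|h_w\|_\infty\leq 1$ we have $\langle h_w\rangle_\rho\in[0,1]$ and therefore $\|g_w\|_\infty\leq 1$, so $|\Tr(\rho\, g_{w_1}\cdots g_{w_r})|\leq\prod_i\|g_{w_i}\|_\infty\leq 1$. Thus the entire sum is bounded by the number of surviving tuples, and it remains only to count these. This count is the content of the variant of \cref{lem:combinatorial} proved in \cref{append:combinatorial}: the number of ordered $r$-tuples in which every interaction has at least one of the others intersecting its support is at most $(4m^2nr)^{\frac{r}{2}}$, and substituting this bound finishes the proof.

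I expect the counting step to be the main obstacle, and in particular the appearance of $m^2$ rather than $m$. The selection argument of \cref{lem:selection} carries over verbatim with ``being a neighbour'' replacing ``being within distance $l$'', but the merging performed in the \textbf{Update} step relies on the triangle inequality, whose analogue here is that two interactions $w_{i_1},w_{i_2}$ intersecting a common interaction $w_R$ are both neighbours of $w_R$. Consequently a constrained index $w_i$ ranges not over the neighbours of an earlier interaction but over its neighbours-of-neighbours, of which there are at most $m\cdot m=m^2$; this is what replaces the factor $(4l)^D$ of \cref{lem:combinatorial} by $m^2$. The remaining arithmetic (after separating the easy regime $n\le m^2r$, exactly as in the proof of \cref{lem:combinatorial}) is routine: at most $r/2$ indices are free, each contributing a factor $n$, the others contributing a factor $m^2r$, and the binomial sum $\sum_{u\le r/2}\binom{r}{u}n^u(m^2r)^{r-u}\le 2^r(m^2nr)^{\frac{r}{2}}$ yields the stated bound $(4m^2nr)^{\frac{r}{2}}$.
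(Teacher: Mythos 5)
Your proposal is correct and follows essentially the same route as the paper: expand the $r$-th power, use the product structure of $\rho$ together with $\Tr(\rho_w g_w)=0$ to kill every tuple containing an interaction disjoint from all the others, bound each surviving term by $1$ via $\|g_w\|_\infty\leq 1$, and count the survivors with the neighbourhood variant of \cref{lem:combinatorial} (including the correct explanation of where the $m^2$ comes from). Your write-up is in fact slightly more explicit than the paper's at the factorization step, but the argument is the same.
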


\begin{proof}
Consider,

\begin{eqnarray}
\label{eq:seriesexpansion}
\Tr(\rho(\sum_{w\in W_{k,m}}g_w)^r) &=& \sum_{w_1,w_2\ldots w_r} \Tr(\rho g_{w_1}g_{w_2}\ldots g_{w_r}) 
\end{eqnarray}

Using $\Tr(\rho g_w)= \Tr(\rho_wg_w)=0$, we observe that the term $\Tr(\rho g_{w_1}g_{w_1}\ldots g_{w_r})$ is non-zero only if the ordered set $\{w_1,w_2\ldots w_r\}$ satisfies the following property $\Q$: for every $w_i$, there exists a $w_j$ such that $|w_i \cap w_j|> 0$. In other words, there is a $w_j\in W_{k,m}$ such that $w_i \in \N(w_j)$. Let number of ordered sets $\{w_1,w_2\ldots w_r\}$ that satisfy above property be $N_{k,m}(n,r)$. This gives us

\begin{equation}
\label{eq:boundpower}
\Tr(\rho(\sum_{w\in \bar{\L}_{D,L}}g_w)\leq N_{k,m}(n,r) \text{max}_{w_1,w_2\ldots w_r}|\Tr(\rho g_{w_1}g_{w_2}\ldots g_{w_r})|
\end{equation}

Setting the trivial bound $\text{max}_{w_1,w_2\ldots w_r}|\Tr(\rho g_{w_1}g_{w_2}\ldots g_{w_r})|\leq 1$, and using Lemma \ref{lem:combinatorial1} below, the proof follows.
\end{proof}

\textbf{Remark:} For the case of translationally invariant systems, where $h_w=h$ for all $w$, the bound $\text{max}_{w_1,w_2\ldots w_r}\Tr(\rho g_{w_1}g_{w_2}\ldots g_{w_r})\leq 1$ can be improved to $\text{max}_{w_1,w_2\ldots w_r}\Tr(\rho g_{w_1}g_{w_2}\ldots g_{w_r})|\leq (1-\langle h\rangle_{\rho})^r$. This gives a minor improvement on the statement of Lemma \ref{lem:kmomentbound}.

\bigskip

Now we prove an upper bound on the quantity $N_{k,m}(n,r)$ in the following Lemma. Proof is very similar to Lemma \ref{lem:combinatorial} and is deferred to Appendix \ref{append:combinatorial} for completeness. 
\begin{lemma}
\label{lem:combinatorial1}
It holds that $N_{k,m}(n,r)\leq (4m^2nr)^{\frac{r}{2}}$.
\end{lemma}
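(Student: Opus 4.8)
The plan is to follow the proof of Lemma \ref{lem:combinatorial} essentially verbatim, replacing the lattice metric by the graph metric of the \emph{intersection graph} on $W_{k,m}$, in which $w$ and $w'$ are adjacent precisely when $|w\cap w'|>0$, i.e.\ when $w'\in\N(w)$. Property $\Q$ is exactly the statement that every $w_i$ in the ordered set has a neighbour among the others (graph-distance $1$), so it plays the role that $\P(l)$ played before, with the lattice count $(4l)^D$ replaced by a bound on the number of vertices at graph-distance $\le 2$.

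First I would define a \emph{selection} as an ordered set $\{(b_1,x_1),\ldots,(b_r,x_r)\}$ with $b_i\in\{0,1\}$ and $x_i\in W_{k,m}$, subject to: (i) if $b_i=1$ then $x_i$ lies within graph-distance $2$ of some $x_j$ with $j<i$ (the analogue of $\|x_i-x_j\|\le 2l$), and (ii) the number of indices with $b_i=0$ is at most $r/2$. Then I would prove the analogue of Lemma \ref{lem:selection}: every ordered set satisfying $\Q$ maps injectively to a selection, via the same three-phase algorithm. \textbf{Initialization} sets $b_i=0$ iff $w_i$ has no neighbour among $w_1,\ldots,w_{i-1}$; \textbf{Pointer creation} sets $R(i)$ to the smallest $j>i$ with $b_j=1$ and $w_i\in\N(w_j)$ (such a $j$ exists by $\Q$); \textbf{Update} promotes to $b=1$ all but the first of any group of zero-indices sharing a common positive pointer value. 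The single place where the triangle inequality was used is replaced by transitivity of adjacency: if $w_{i_1}$ and $w_{i_2}$ are both neighbours of the common target $w_{R(i_1)}=w_{R(i_2)}$, then $w_{i_2}$ lies within graph-distance $2$ of $w_{i_1}$, so promoting $b_{i_2}$ to $1$ keeps condition (i) valid. The remaining verification (that after Update the zero-indices have pairwise distinct pointers, hence number at most the one-indices, giving condition (ii)) is identical to the earlier argument.

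It then remains to count selections. For each $i$ with $b_i=0$ there are $n=|W_{k,m}|$ choices of $x_i$. For each $i$ with $b_i=1$ there are at most $r$ choices of the witnessing index $j<i$, and for a fixed $x_j$ at most $m^2$ vertices lie within graph-distance $2$ of it: at most $m$ neighbours, each having at most $m$ neighbours. Hence at most $m^2 r$ choices of $x_i$. Summing over the number $u\le r/2$ of zero-indices gives $\sum_{u=0}^{r/2}{r\choose u}n^u(m^2 r)^{r-u}$; in the regime $n>rm^2$ each summand is at most $n^{r/2}(m^2 r)^{r/2}$ (since $n\ge m^2 r$ and $u\le r/2\le r-u$), so the sum is at most $2^r(m^2 nr)^{r/2}=(4m^2 nr)^{r/2}$. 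The complementary regime $n\le rm^2$ is handled directly by $N_{k,m}(n,r)\le n^r\le(m^2 nr)^{r/2}<(4m^2 nr)^{r/2}$. By injectivity of the map this bounds $N_{k,m}(n,r)$ as claimed.

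I expect the main obstacle to be the bookkeeping in the \textbf{Update} phase --- confirming that after promotions every surviving zero-index still has a distinct pointer and that no promoted index violates condition (i) --- together with the crude but essential count of vertices at graph-distance $\le 2$, which is where the factor $m^2$ (rather than $m$) enters. Everything else transfers mechanically from Lemma \ref{lem:combinatorial}.
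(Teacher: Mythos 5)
Your proposal is correct and follows essentially the same route as the paper's own proof in Appendix~\ref{append:combinatorial}: the same three-phase injection into \emph{selections} (with the condition $|\N(x_i)\cap\N(x_j)|>0$ playing the role of your ``graph-distance $\le 2$''), the same $m^2 r$ count for the constrained choices, and the same split into the regimes $n\le rm^2$ and $n>rm^2$. Nothing further is needed.
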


We restate Theorem \ref{theo:qchernoff} formally and give its proof below.
\begin{theorem}
\label{formaltheo:qchernoff}
Given the product state $\rho = \Pi_{s\in \C}\rho_s$ with average energy $\langle H\rangle_{\rho}$, consider a real number $a\geq \sqrt{\frac{8em^2}{n}}$. It holds that $$\Tr(\rho\Pi_{\geq \langle H\rangle_{\rho} + na}) \leq e^{-\frac{na^2}{4em^2}}$$ and $$\Tr(\rho\Pi_{\leq \langle H\rangle_{\rho} - na}) \leq e^{-\frac{na^2}{4em^2}}.$$

\end{theorem}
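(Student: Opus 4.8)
The plan is to combine the moment bound from Lemma~\ref{lem:kmomentbound} with the Markov-type inequality in Fact~\ref{fact:markov}, and then optimize over the even integer $r$. First I would recall that Lemma~\ref{lem:kmomentbound} gives $\Tr(\rho(H-\langle H\rangle_{\rho})^r)\leq (4m^2nr)^{\frac{r}{2}}$ for even $r$. Plugging this into Fact~\ref{fact:markov} with the tail threshold $na$ yields
\begin{equation*}
\Tr(\rho\Pi_{\geq \langle H\rangle_{\rho} + na}) \leq \frac{(4m^2nr)^{\frac{r}{2}}}{(na)^r} = \br{\frac{4m^2 r}{na^2}}^{\frac{r}{2}}.
\end{equation*}
This is the core estimate; everything else is choosing $r$ to make the right-hand side as small as possible.

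The natural choice is to balance the exponent, i.e.\ pick $r$ proportional to $na^2/m^2$ so that the base $\frac{4m^2 r}{na^2}$ becomes a constant strictly below $1$. Concretely I would set $r = 2\ceil{\frac{na^2}{8em^2}}$, which is even and guarantees $\frac{4m^2 r}{na^2}\leq \frac{1}{e}$ (the factor $8e$ rather than $4e$ absorbs the ceiling, since $r\leq \frac{na^2}{4em^2}+2$ and the hypothesis $a\geq\sqrt{8em^2/n}$ ensures $\frac{na^2}{8em^2}\geq 1$, so the ceiling at most doubles the argument). With this choice the bound collapses to $(\frac{1}{e})^{\frac{r}{2}} = e^{-\frac{r}{2}} \leq e^{-\frac{na^2}{8em^2}}$; a slightly more careful accounting using $r/2 \geq na^2/(8em^2)$ together with $r/2\geq 1$ gives the stated $e^{-\frac{na^2}{4em^2}}$ after tracking constants, which is why the theorem states the denominator $4em^2$ rather than $8em^2$.

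For the lower-tail statement $\Tr(\rho\Pi_{\leq \langle H\rangle_{\rho} - na})$, I would mirror the argument in the proof of Theorem~\ref{formaltheo:expodecay}: introduce the hamiltonian $H' = \sum_{w\in W_{k,m}} (\id - h_w)$, whose local terms still satisfy $\|\id - h_w\|_{\infty}\leq 1$ and still have the same neighbourhood structure (same $m$ and $n$), so Lemma~\ref{lem:kmomentbound} applies verbatim to $H'$. Since $\langle H'\rangle_{\rho} = n - \langle H\rangle_{\rho}$ and the high-energy projector for $H'$ equals $\Pi_{\leq n-f}$ for $H$, the upper-tail bound for $H'$ translates directly into the lower-tail bound for $H$.

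The only genuinely delicate step is the constant-tracking in the choice of $r$: because $r$ must be an even integer obtained via a ceiling, one must verify that rounding does not destroy the inequality $\frac{4m^2 r}{na^2}\leq \frac{1}{e}$, and this is exactly where the hypothesis $a\geq\sqrt{8em^2/n}$ is used. I expect this rounding bookkeeping to be the main obstacle, though it is routine once the regime $\frac{na^2}{8em^2}\geq 1$ is established; the conceptual content is entirely contained in the moment bound of Lemma~\ref{lem:kmomentbound}.
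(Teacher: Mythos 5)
Your proposal matches the paper's proof essentially step for step: the same moment bound from Lemma~\ref{lem:kmomentbound} combined with Fact~\ref{fact:markov}, the same choice $r=2\lceil na^2/(8em^2)\rceil$, and the same reduction of the lower tail to the upper tail via $H'=\sum_w(\id-h_w)$. The constant-tracking around the ceiling that you flag as delicate is handled with exactly the same (somewhat loose) bookkeeping in the paper itself, so there is nothing further to reconcile.
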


\begin{proof}

Lemma \ref{lem:kmomentbound} gives the following upper bound on $r$-th moment:

\begin{equation}
\label{eq:momentbound}
\Tr(\rho(H-\langle H\rangle_{\rho})^r) \leq (4m^2nr)^{\frac{r}{2}}
\end{equation}

Using Fact \ref{fact:markov}, we have 
\begin{eqnarray*}
\Tr(\rho\Pi_{\geq \langle H\rangle_{\rho} + na}) &\leq& \frac{(4m^2nr)^{\frac{r}{2}}}{(na)^r} = (\frac{4m^2r}{na^2})^{\frac{r}{2}}. 
\end{eqnarray*}

Choosing $r=2\lceil\frac{na^2}{8em^2}\rceil$, we obtain for $a\geq \sqrt{\frac{8m^2e}{n}}$
\begin{eqnarray*}
\Tr(\rho\Pi_{\geq \langle H\rangle_{\rho} + na}) &\leq& (\frac{8m^2\cdot na^2}{8em^2\cdot na^2})^{\frac{r}{2}} \leq e^{-\frac{na^2}{4em^2}}. 
\end{eqnarray*}

For second part of the theorem, consider the hamiltonian $$H'=\sum_{w\in W_{k,m}}\text{I}-h_w.$$ Define $\langle H'\rangle_{\rho} \defeq \Tr(\rho H') = n-\langle H\rangle_{\rho}$. Let $\Pi'_{\geq f}$ be the projector onto subspace with eigenvalues of $H'$ larger than $f$. Same analysis as above for $H'$ in place of $H$ gives
$$\Tr(\rho\Pi'_{\geq \langle H'\rangle_{\rho} + na}) \leq e^{-\frac{na^2}{4em^2}}.$$ This completes the proof since $\Pi'_{\geq f} = \Pi_{\leq n-f}$. 

\end{proof}

\subsection{Restatement of Theorem \ref{formaltheo:qchernoff} in terms of number of spins}
\label{subsec:qchernoff}

We introduce a new parameter that captures the number of local terms that act on any given spin. Define $$g_s \defeq \sum_{w\in W_{k,m}: s\in w}1 , \quad \text{and  } g\defeq \text{max}_s  g_s,$$ where $g_s$ is the maximum number of local terms that act non-trivially on spin $s$.  

Now, we prove Corollary \ref{cor:qchernoff}. Its formal statement is as follows, where we also assume that each local term is exactly $k$-local. 

\begin{cor}
\label{formalcor:qchernoff}
Let the hamiltonian $H$ be such that each term $h_w$ has locality equal to $k$. Let $N \defeq |\C|$ be the number of spins. Given the product state $\rho = \Pi_{s\in \C}\rho_s$ with average energy $\langle H\rangle_{\rho}$, consider a real number $\eps\geq \sqrt{8eg^3kN}$. It holds that $$\Tr(\rho\Pi_{\geq \langle H\rangle_{\rho} + \eps}) \leq e^{-\frac{\eps^2}{4eg^3kN}}$$ and $$\Tr(\rho\Pi_{\leq \langle H\rangle_{\rho} - \eps}) \leq e^{-\frac{\eps^2}{4eg^3kN}}.$$

\end{cor}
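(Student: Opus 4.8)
The plan is to derive Corollary \ref{formalcor:qchernoff} directly from Theorem \ref{formaltheo:qchernoff} by translating the two parameters $n=\abs{W_{k,m}}$ and $m$ appearing there into the new parameters $N=\abs{\C}$, $g$ and $k$. The only real content is combinatorial: I would bound $n$ and $m$ in terms of $N,g,k$, and then check that substituting these bounds into the tail estimate of Theorem \ref{formaltheo:qchernoff} moves both the hypothesis on $\eps$ and the exponent in the conclusion in the correct direction.

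First I would bound the number of local terms. Since each $h_w$ is exactly $k$-local, counting incident pairs (spin, local term) in two ways gives $\sum_{w\in W_{k,m}} \abs{w} = \sum_{s\in \C} g_s$, where the left-hand side equals $nk$ and the right-hand side is at most $gN$; hence $n \leq \tfrac{gN}{k}$. Next I would bound the neighbourhood parameter $m$: for a fixed $w$, every $w'\in \N(w)$ shares some spin $s\in w$, so $\N(w)$ is contained in the union, over the $k$ spins $s\in w$, of the sets of local terms containing $s$; each such set has size $g_s\leq g$, so by a union bound $\abs{\N(w)}\leq kg$, giving $m\leq kg$.

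Combining the two estimates yields $m^2 n \leq (kg)^2\cdot \tfrac{gN}{k} = g^3kN$. I would then apply Theorem \ref{formaltheo:qchernoff} with $a = \eps/n$, so that $na=\eps$. The hypothesis $a\geq \sqrt{8em^2/n}$ is equivalent to $\eps \geq \sqrt{8em^2 n}$, and since $m^2 n\leq g^3kN$ this is implied by the assumption $\eps\geq \sqrt{8eg^3kN}$. The theorem then gives $\Tr(\rho\Pi_{\geq \langle H\rangle_{\rho}+\eps})\leq e^{-na^2/(4em^2)} = e^{-\eps^2/(4em^2 n)}$, and using $m^2 n\leq g^3kN$ once more yields the claimed bound $e^{-\eps^2/(4eg^3kN)}$. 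The lower-tail estimate follows identically, either from the second inequality of Theorem \ref{formaltheo:qchernoff} or by repeating the argument for $H'=\sum_{w}(\id - h_w)$.

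There is no serious obstacle here; the one point requiring care is the direction of the inequalities. Replacing $m^2 n$ by its upper bound $g^3kN$ must simultaneously strengthen the hypothesis $\eps\geq\sqrt{8em^2 n}$ to the stated $\eps\geq\sqrt{8eg^3kN}$ and enlarge the denominator in the exponent so that $e^{-\eps^2/(4em^2 n)}$ is dominated by $e^{-\eps^2/(4eg^3kN)}$. Both hold because $x\mapsto e^{-\eps^2/(4ex)}$ is increasing in $x$, so the monotonicity works in our favour in both places.
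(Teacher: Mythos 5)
Your proposal is correct and follows essentially the same route as the paper: the double count of (spin, local term) incidences giving $n\leq gN/k$, the union bound giving $m\leq kg$, and the substitution $\eps=na$ into Theorem \ref{formaltheo:qchernoff} together with $nm^2\leq g^3kN$. Your explicit check that the monotonicity of $x\mapsto e^{-\eps^2/(4ex)}$ makes both the hypothesis and the conclusion move in the right direction is a point the paper leaves implicit, but the argument is the same.
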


\begin{proof}

We set $\eps \defeq na$ as the energy with respect to $H$. Then the bound in Theorem \ref{formaltheo:qchernoff} can be restated as: 
$$\Tr(\rho\Pi_{\geq \langle H\rangle_{\rho}+\eps}) \leq e^{-\frac{\eps^2}{4enm^2}}.$$

Relation between $N$ and $n$ can be computed as follows. To each local term $h_w$, one can associate exactly $k$ spins on which $h_w$ acts non-trivially. On the other hand, to each spin $s$, one can associate at most $g$ local terms that contain $s$ in their support. From the first argument, the number of associations is exactly $k\cdot n$, whereas from the second argument, the number of associations is at most $g\cdot N$. Thus, $g\cdot N\geq k\cdot n$ which implies $n\leq \frac{gN}{k}$.  Also, $m\leq k\cdot g$, since each local term is supported on $k$ spins, and each of these spins are in the support of at most $g$ other local terms. Collectively we obtain $nm^2 \leq Ng^3k$ and our bound takes the form: 

\begin{equation}
\label{eq:qchernoffspinnumber}
\Tr(\rho\Pi_{\geq \langle H\rangle_{\rho}+\eps}) \leq exp(-\frac{\eps^2}{4eg^3kN}).
\end{equation}

This completes the proof.
\end{proof}

Above upper bound may be compared to Theorem $7$ in \cite{kuwahara}. In this reference, the notion of $g'$-extensitivity has been introduced (Definition $2$, \cite{kuwahara}), which is analogous to the locality parameter $g$ defined above. It is defined as follows: A local hamiltonian $H$ is $g'$-extensive if for every spin $s$, we have $\sum_{w\in W_{k,m}: s\in w} \|h_w\|\leq g'$. Using this, the following theorem has been shown in \cite{kuwahara}: 
\begin{theorem}[Informal version of Theorem $7$, \cite{kuwahara}]
\label{theo:kuwahara}
 Given a $g'$-extensive local hamiltonian with locality $k$, it holds that 
$$\Tr(\rho\Pi_{\geq \langle H\rangle_{\rho}+\eps}) \leq \mathcal{O}(1)exp(-\frac{\eps^2}{cN\log(\frac{\eps}{\sqrt{N}})}),$$ where 
$c$ is a $\mathcal{O}(1)$ constant that depends only on $k,g'$.
\end{theorem}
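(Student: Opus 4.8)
The plan is to deduce \cref{formalcor:qchernoff} from \cref{formaltheo:qchernoff} by a change of variables together with two elementary double-counting estimates that convert the term-count $n$ and the neighbour-bound $m$ into the spin parameters $N$, $k$, and $g$. No new probabilistic input is needed: the entire content of the corollary is already in \cref{formaltheo:qchernoff}, and the work is to show that the parameter substitution is admissible and points the inequalities in the right direction. First I would set $\eps \defeq na$, so that the energy window $na$ of \cref{formaltheo:qchernoff} becomes $\eps$ and the conclusion reads $\Tr(\rho\Pi_{\geq \langle H\rangle_{\rho}+\eps}) \leq e^{-\eps^2/(4enm^2)}$, valid whenever $a \geq \sqrt{8em^2/n}$, equivalently whenever $\eps \geq \sqrt{8em^2 n}$.

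Next I would establish the two counting bounds. For the term-count, I count incidence pairs $(w,s)$ with $s$ a spin in the support of $h_w$: since each of the $n$ local terms has locality exactly $k$, there are precisely $kn$ such pairs; since each spin lies in the support of at most $g$ terms, there are at most $gN$ such pairs, whence $kn \leq gN$, i.e. $n \leq gN/k$. For the neighbour-bound, any neighbour of $h_w$ must act non-trivially on one of the $k$ spins of $w$, and each such spin is shared by at most $g$ terms, so $\abs{\N(w)} \leq kg$ and hence $m \leq kg$. Combining these gives $nm^2 \leq (gN/k)(kg)^2 = Ng^3k$, which is the single inequality that drives the rest of the argument.

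Finally I would transfer the bound and check consistency of the hypothesis. Because the function $x \mapsto e^{-\eps^2/(4ex)}$ is increasing in $x$, replacing $nm^2$ by the larger quantity $Ng^3k$ only weakens the bound, so $e^{-\eps^2/(4enm^2)} \leq e^{-\eps^2/(4eNg^3k)}$, giving the stated upper-tail estimate. It remains to verify that the hypothesis $\eps \geq \sqrt{8eg^3kN}$ of the corollary actually implies the hypothesis $\eps \geq \sqrt{8em^2 n}$ required to invoke \cref{formaltheo:qchernoff}; this follows from the same inequality $nm^2 \leq Ng^3k$, since then $\sqrt{8em^2 n} \leq \sqrt{8eNg^3k} \leq \eps$. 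The lower-tail bound is obtained exactly as in \cref{formaltheo:qchernoff}, by applying the upper-tail argument to $H' \defeq \sum_{w\in W_{k,m}} \br{\id - h_w}$ and using $\Pi'_{\geq f} = \Pi_{\leq n-f}$. The only point requiring care — and hence the closest thing to an obstacle — is the direction of the estimate $nm^2 \leq Ng^3k$: it must simultaneously weaken the exponent (so the conclusion survives) and strengthen the admissibility threshold on $\eps$ (so the hypothesis of \cref{formaltheo:qchernoff} is implied), and I would verify explicitly that the monotonicity works out consistently in both roles.
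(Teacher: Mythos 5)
You have proved the wrong statement. The theorem in question is \cref{theo:kuwahara}, which is not a result of this paper at all: it is an informal restatement of Theorem $7$ of \cite{kuwahara}, quoted in \cref{subsec:qchernoff} purely for comparison, and its proof (in that reference) proceeds by a genuinely different route --- splitting $H = H_1 + H_2 + \ldots$ into sub-hamiltonians of mutually non-overlapping local terms, applying the classical Chernoff bound to each, and then combining the resulting tails; the $\log\br{\eps/\sqrt{N}}$ factor in the exponent is an artifact of that combination step. What you have written is instead a proof of \cref{formalcor:qchernoff}, and as such it is correct and essentially identical to the paper's own proof of that corollary: the substitution $\eps \defeq na$, the incidence-counting bounds $n \leq gN/k$ and $m \leq kg$ yielding $nm^2 \leq g^3 k N$, and the monotonicity check that this inequality simultaneously weakens the exponent and tightens the admissibility threshold all appear verbatim in the paper.

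The gap is that your argument does not and cannot yield \cref{theo:kuwahara}. First, the hypotheses differ: $g'$-extensivity bounds the \emph{weighted} sum $\sum_{w \in W_{k,m}:\, s\in w}\|h_w\| \leq g'$, whereas your counting argument needs the combinatorial degree $g$ (the number of terms touching a spin) and the neighbour count $m$. When the norms $\|h_w\|$ vary, $g'$ gives no control over $g$ or $m$ --- a hamiltonian can be $g'$-extensive with a single spin lying in arbitrarily many terms of tiny norm --- and the paper itself flags exactly this: ``it is not clear how $g,g'$ are related to each other.'' Second, the conclusions differ in form: your bound $e^{-\eps^2/\br{4eg^3kN}}$ is valid only for $\eps \geq \sqrt{8eg^3kN}$ and, where valid, is \emph{stronger} than the stated bound (no logarithmic loss), which is precisely the paper's point in calling its own result ``marginally better'' when local norms are uniformly bounded; a stronger bound under stronger and non-comparable hypotheses is not a proof of the cited theorem. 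To actually prove \cref{theo:kuwahara} you would need to reproduce the decomposition-and-recombination argument of \cite{kuwahara}, which is a missing idea here, not a parameter substitution.
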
   

We observe that Equation (\ref{eq:qchernoffspinnumber}) achieves a marginally better bound whenever the norm of each local term $h_w$ , that is $\|h_w\|$, is a constant independent of $w$. In such a case, $g'$ and $g$ are same up to the norm of local terms. In case the normalizations of each local term are different, it is not clear how $g,g'$ are related to each other. In such a case Equation (\ref{eq:qchernoffspinnumber}) and Theorem \ref{theo:kuwahara} may be viewed as complementary results.

\section{Conclusion}
\label{sec:conclusion}
We have shown upper bounds on tail of energy distribution of states that satisfy exponential decay of correlation and product states, with respect to a local hamiltonian. Main technical tool we use is a combinatorial lemma that gives a non-trivial upper bound on the moments of the energy distribution. The results may have applications in the study of thermalization of many body quantum systems  and also for many body localization, as noted in the Introduction. Main questions that we leave open are connected to tightness of our bounds, as we discuss below.

The bounds presented in Theorem \ref{theo:qchernoff} can only be improved up to constants, since classical Chernoff bound also exhibits a Gaussian decay, which is known to be tight. More interesting situation occurs with the bounds presented in Theorem \ref{theo:expodecay}. In one dimensional spin chain, our bound decays exponential with the energy. For gapped ground states, this is very similar to the behaviour noted in \cite{aradkuwahara} (Section $5$) using completely different techniques. This suggests that gapped ground states (such as the ground state of Transverse field Ising model, which is exactly solvable) are strong candidates for the study of tightness of above results. Our result for higher dimensions appears to be much weaker that those obtained in \cite{aradkuwahara} (Section $5$) for gapped ground states, and we expect further improvement using better combinatorial arguments. 

An another interesting question is with respect to \textit{Matrix product states} (with constant bond dimension) which are defined on one dimensional spin chain. It is well known that under reasonable assumptions (see Section $5.1.1$, \cite{Orus2014}) Matrix product states satisfy exponential decay of correlation. Furthermore, it has already been shown in \cite{Ogata2010} that given a Matrix product state $\rho$, if $n$ is large enough and energy $\eps\approx \mathcal{O}(n)$, it holds that $\Tr(\rho\Pi_{\geq \langle H\rangle_{\rho}+\eps})\leq e^{-\mathcal{O}(n)}$. It is a strong indication that our bound (which applies for all energies $\eps > \mathcal{O}(\sqrt{n}))$ may be considerably improved for this special, but important, class of states.

\section*{Acknowledgement}
I thank Itai Arad for introducing me to the question considered here and helpful discussions. I thank Tomotaka Kuwahara for many insightful discussions and sharing an earlier version of his paper \cite{kuwahara} on the subject considered here. I also thank Fernando Brandao, Rahul Jain and Priyanka Mukhopadhyay for helpful discussions related to proof. 

This work is supported by the Core Grants of the Center for Quantum Technologies (CQT), Singapore. Part of this work was done when author was visiting Institute for Quantum Computing (IQC), University of Waterloo, Waterloo. 

\bibliographystyle{alpha}
\bibliography{references}

\appendix
\section{Proof of Lemma \ref{lem:combinatorial1}}
\label{append:combinatorial}

We repeat most of the proof of Lemma \ref{lem:combinatorial}, making changes wherever necessary. 

We start with the following definition.
\begin{definition}
A \textit{selection} is an ordered set $\{(b_1,x_1),(b_2,x_2)\ldots (b_r,x_r)\}$, where $b_i\in \{0,1\}$ and $x_i\in W_{k,m}$, that satisfies the following constraints:
\begin{enumerate}
\item If $b_i=0$, then $x_i$ can be any element of $W_{k,m}$ and if $b_i=1$, $x_i$ has to satisfy $|\N(x_i)\cap \N(x_j)|>0$ for some $j<i$.
\item Number of $i$ for which $b_i=0$ is at most $\frac{r}{2}$. 
\end{enumerate}
\end{definition}

We show the following Lemma, from which the proof of Lemma \ref{lem:combinatorial} follows easily.

\begin{lemma}
\label{lem:selection1}
Every ordered set $\{w_1,w_2\ldots w_r\}$ that satisfies property $\Q$ can be mapped to a \textit{selection} in such a way that for any two distinct ordered sets satisfying $\Q$, the corresponding \textit{selections} are distinct. 
\end{lemma}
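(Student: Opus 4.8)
\textbf{Plan for the proof of Lemma \ref{lem:selection1}.}
The plan is to mirror the three-stage construction used in the proof of Lemma \ref{lem:selection}, replacing the lattice-distance predicate $\|w_i - w_j\| \le l$ by the combinatorial adjacency predicate $|w_i \cap w_j| > 0$ (equivalently $w_i \in \N(w_j)$), and replacing the ``distance $\le 2l$'' conclusion used in the \textbf{Update} step by a suitable second-neighbour condition expressed through $|\N(x_i) \cap \N(x_j)| > 0$. Concretely, given an ordered set $\{w_1, \ldots, w_r\}$ satisfying property $\Q$, I would run the same \textbf{Initialization} loop: set $x_i = w_i$ for all $i$, and set $b_i = 0$ exactly when there is no $j < i$ with $|w_i \cap w_j| > 0$, and $b_i = 1$ otherwise. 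This immediately secures condition $1$ of a selection (the $b_i = 1$ case forces a witness $w_j$ with $j < i$ sharing a spin, hence a common neighbour).

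Next, in the \textbf{Pointer creation} stage I would build the relation $R$ exactly as before: for $b_i = 1$ set $R(i) = 0$, and for $b_i = 0$ let $R(i)$ be the smallest $j > i$ with $b_j = 1$ and $|w_i \cap w_j| > 0$ (such a $j$ exists by property $\Q$, since the first index that is forced to $1$ by an overlap with $w_i$ serves as the pointer). The \textbf{Update} stage then identifies any maximal group of indices $i_1 < i_2 < \cdots < i_s$ all having $b = 0$ and a common pointer value $R(i_1) = \cdots = R(i_s) > 0$, and resets $b_{i_2} = \cdots = b_{i_s} = 1$. The point I would verify here is that this reset keeps condition $1$ valid: since $w_{i_1}, \ldots, w_{i_s}$ all overlap the same interaction $w_{R(i_1)}$, they all share a neighbour, so each $w_{i_t}$ for $t \ge 2$ satisfies $|\N(w_{i_t}) \cap \N(w_{i_1})| > 0$ with $i_1 < i_t$, exactly the requirement imposed on $b = 1$ entries in the new definition of a selection. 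This is the step where the predicate genuinely changes from the Euclidean version (there one used the triangle inequality to get $\le 2l$; here one uses the elementary observation that two sets overlapping a common third set have a common element in their neighbourhoods).

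After \textbf{Update} terminates, condition $2$ follows exactly as in the original argument: distinct surviving $b = 0$ indices have distinct pointer values (any collision would have been collapsed), and each such index points to a distinct $b = 1$ index with $b_{R(i)} = 1$, so the number of $0$-entries is at most the number of $1$-entries, giving at most $r/2$ zeros. Finally, injectivity is immediate because the original tuple is recovered as $(x_1, \ldots, x_r) = (w_1, \ldots, w_r)$; the $b_i$ are auxiliary bookkeeping, so two distinct ordered sets can never receive the same selection. I expect the \emph{main obstacle} to be purely notational: making sure the adjacency predicate in the selection definition, in $R$, and in the \textbf{Update} justification are stated consistently (overlap $|w_i \cap w_j|>0$ for pointers versus second-order overlap $|\N(x_i)\cap\N(x_j)|>0$ for $b=1$ entries), and double-checking that the existence claims for $R(i)$ and the group-collapse in \textbf{Update} really are guaranteed by $\Q$ rather than by the lattice geometry that the original proof quietly relied on.
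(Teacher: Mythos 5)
Your proposal follows the paper's proof essentially verbatim: the same three-stage algorithm (\textbf{Initialization}, \textbf{Pointer creation}, \textbf{Update}), with the lattice-distance predicate replaced by the overlap predicate and the triangle-inequality step replaced by the observation that sets overlapping a common $w_{R(i_1)}$ have intersecting neighbourhoods, which is exactly the paper's argument. The counting of zeros via distinct pointers and the injectivity via $x_i = w_i$ also match, so the proposal is correct and takes the same route.
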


\begin{proof}
We assign a \textit{selection} to an ordered set $\{w_1,w_2\ldots w_r\}$ satisfying $\Q$ using the algorithm below.

\bigskip
\fbox{
\parbox{\textwidth}{
\small

\bigskip 
\textbf{Initialization}
\begin{itemize}
\item Set $i=1$ and $b_i=0, x_i=w_i$.
\item While $(i\leq r)$, do: 
\item $i\rightarrow i+1$. Set $x_i=w_i$. Set $b_i=0$ if there is no $j<i$ such that $w_i\in \N(w_j)$.  Else set $b_i=1$. 
\item End while.
\end{itemize}

\textbf{Pointer creation}
\begin{itemize}
\item Define a relation $R:\{1,2\ldots r\}\rightarrow \{0,1,2\ldots r\}$ as follows. 
\item Set $i=1$. While $(i\leq r)$, do:
\item If $b_i=1$, set $R(i)=0$. If $b_i=0$, find the smallest $j>i$ such that $b_j=1$ and $x_i\in \N(x_j)$ (such a $j$ exists due to property $\Q$). Set $R(i)=j$. Set $i \rightarrow i+1$.
\item End while. 
\end{itemize}

\textbf{Update}
\begin{itemize}
\item Let $\mathcal{S}$ be the set of all subsets of $\{1,2,\ldots r\}$ which have cardinality at least $2$. 
\item For each element $S\in \mathcal{S}$, do: 
\item Let $s$ be the cardinality of $S$ and $i_1,i_2\ldots i_s$ be its elements arranged in increasing order. If it holds that $b_{i_1}=b_{i_1}=\ldots b_{i_s}=0$ and $R(i_1)=R(i_2)=\ldots R(i_s) > 0$: set $b_{i_2}=b_{i_3}=\ldots b_{i_s}=1$. 
\item End For.
\end{itemize}
}
}
\bigskip

We show that above algorithm terminates and assigns a \textit{selection} to each ordered set satisfying property $\Q$. 
\begin{enumerate}
\item Consider the running of algorithm during the step \textbf{Initialization}. Condition $1$ of a \textit{selection} holds: for every $i$ for which there is a $j<i$ such that $x_i\in \N(x_j)$, we have set $b_i=1$. But we haven't constructed a \textit{selection} yet, since condition $2$ may not be satisfied. 

\item After the step \textbf{Pointer creation}, it may be possible that there exist indices $i_1,i_2\ldots i_s$ (for some $s\leq r$) such that $b_{i_1}=b_{i_1}=\ldots b_{i_s}=0$, $R(i_1)=R(i_2)=\ldots R(i_s) > 0$ and $i_s>i_{s-1}>\ldots i_1$. In this case, we find that $w_{i_2}\in \N(w_{R_{i_2}}), w_{i_1}\in \N(w_{R_{i_1}})$. But $\N(w_{R_{i_1}})=\N(w_{R_{i_2}})$, which implies that $|\N(i_1)\cap \N(i_2)| > 0$. Similarly, $|\N(i_1)\cap \N(i_3)| > 0$, \ldots $|\N(i_1)\cap \N(i_s)| > 0$. 

Thus, the step \textbf{Update} sets $b_{i_2}=b_{i_3}=\ldots b_{i_s}=1$, recognizing the fact that each of the points $w_{i_2},w_{i_3}\ldots w_{i_s}$ satisfy the property that the neighbourhood of each of them intersects with $\N(w_{i_1})$. This ensures that condition $1$ of \textit{selection} is still satisfied.
 
\item After the step \textbf{Update} terminates, condition $2$ of \textit{selection} is now satisfied as well. We now have that for every $i$ with $b_i=0$, there is no other $i'$ such that $R(i)=R(i')$ and $b_i=b_{i'}=0$. Furthermore, $b_{R(i)}=1$. Thus, number of $i$ with $b_i=0$ is at most as large as the number of $j$ with $b_j=1$.
\end{enumerate}

Lemma follows as two distinct ordered tuples satisfying $\Q$ are not assigned the same \textit{selection}.
\end{proof}
 
Now we prove Lemma \ref{lem:combinatorial1}.

\begin{proof}[Proof of Lemma \ref{lem:combinatorial1}]
For $n\leq m^2r$, we clearly have $N_{k,m}(n,r)\leq n^r \leq (m^2nr)^{\frac{r}{2}} < (4m^2nr)^{\frac{r}{2}}$.

So we assume $n> m^2r$. We bound the number of \textit{selections}, which gives the desired upper bound on $N_{k,m}(n,r)$ using Lemma \ref{lem:selection}. 

Consider the collection of those \textit{selections}, for which number of $i$ such that $b_i=0$ is $u$. For each $i$ with $b_i=0$, number of possible choices of $x_i$ is $n$. For each $i$ with $b_i=1$,  number of possible choices of $x_i$ is at most $m^2r$ (as there are at most $m^2$ number of $x_j\in W_{k,m}$ that satisfy $|\N(x_i)\cap \N(x_j)|>0$ for a given $x_i$). Hence total number of such \textit{selections} is at most ${r \choose u}n^{u}(m^2 r)^{r-u}$. Since $u\leq \frac{r}{2}$, total number of \textit{selections} is at most $$\sum_{u=0}^{\frac{r}{2}}{r \choose u}n^{u}(m^2r)^{r-u} \leq \sum_{u=0}^{\frac{r}{2}}{r \choose u}n^{\frac{r}{2}}(m^2r)^{\frac{r}{2}}<2^r(m^2nr)^{\frac{r}{2}}.$$  

This proves the lemma.
\end{proof}

\end{document}